\documentclass[english]{article}
\usepackage[T1]{fontenc}
\usepackage[latin9]{inputenc}
\usepackage{geometry}
\geometry{verbose,tmargin=2.5cm,bmargin=2.5cm,lmargin=2cm,rmargin=2cm}
\usepackage{float}
\usepackage{booktabs}
\usepackage{mathtools}
\usepackage{amsmath}
\usepackage{amsthm}
\usepackage{amssymb}
\usepackage{graphicx}

\makeatletter

\providecommand{\tabularnewline}{\\}
\floatstyle{ruled}
\newfloat{algorithm}{tbp}{loa}
\providecommand{\algorithmname}{Algorithm}
\floatname{algorithm}{\protect\algorithmname}

\numberwithin{figure}{section}
\numberwithin{equation}{section}
\theoremstyle{remark}
\newtheorem*{rem*}{\protect\remarkname}
\theoremstyle{plain}
\newtheorem{thm}{\protect\theoremname}
\theoremstyle{plain}
\newtheorem{assumption}[thm]{\protect\assumptionname}
\theoremstyle{plain}
\newtheorem{lem}[thm]{\protect\lemmaname}

\usepackage{algorithm,algpseudocode}

\@ifundefined{showcaptionsetup}{}{%
 \PassOptionsToPackage{caption=false}{subfig}}
\usepackage{subfig}
\makeatother

\usepackage{babel}
\providecommand{\assumptionname}{Assumption}
\providecommand{\lemmaname}{Lemma}
\providecommand{\remarkname}{Remark}
\providecommand{\theoremname}{Theorem}

\begin{document}
\title{A Random Batch Method for Efficient Ensemble Forecasts of Multiscale
Turbulent Systems}
\author{Di Qi\textsuperscript{a} and Jian-Guo Liu\textsuperscript{b}}
\date{\textsuperscript{a }Department of Mathematics, Purdue University,
150 North University Street, West Lafayette, IN 47907, USA\\ \textsuperscript{b }Department of Mathematics and Department of Physics, Duke University, Durham, NC 27708, USA\\}
\maketitle
\begin{abstract}
A new efficient ensemble prediction strategy is developed for a general
turbulent model framework with emphasis on the nonlinear interactions
between large and small scale variables. The high computational cost
in running large ensemble simulations of high dimensional equations
is effectively avoided by adopting a random batch decomposition of
the wide spectrum of the fluctuation states which is a characteristic
feature of the multiscale turbulent systems. The time update of each
ensemble sample is then only subject to a small portion of the small-scale fluctuation
modes in one batch, while the true model dynamics with multiscale
coupling is respected by frequent random resampling of the batches
at each time updating step. We investigate both theoretical and numerical
properties of the proposed method. First, the convergence of statistical
errors in the random batch model approximation is shown rigorously
independent of the sample size and full dimension of the system. Then,
the forecast skill of the computational algorithm is tested on two
representative models of turbulent flows exhibiting many key statistical
phenomena with direct link to realistic turbulent systems. The random
batch method displays robust performance in capturing a series of
crucial statistical features of general interests including highly
non-Gaussian fat-tailed probability distributions and intermittent
bursts of instability, while requires a much lower computational cost
than the direct ensemble approach. The efficient random batch method
also facilitates the development of new strategies in uncertainty
quantification and data assimilation for a wide variety of complex
turbulent systems in science and engineering.
\end{abstract}

\section{Introduction and background}

Turbulent dynamical systems appearing in many natural and engineering
fields \cite{nicholson1983introduction,frisch1995turbulence,salmon1998lectures,majda2016introduction,tao2009multiscale}
are characterized by a wide range of spatiotemporal scales in a high
dimensional phase space. Small uncertainties in the multiscale high-dimensional
states can be rapidly amplified through the nonlinearly coupled dynamics
and inherent instability possessed by the turbulent flow. These distinctive
features give rise to a wide variety of complex phenomena such as
intermittent bursts of extreme flow structures and strongly non-Gaussian
probability density functions (PDFs) in the key state variables \cite{young2002srb,neelin2010long,majda2018strategies,wilcox1988multiscale}.
A probability representation for the evolution of the major flow states
is thus essential to accurately quantify the uncertainty in the practical
prediction of such turbulent systems. The ensemble forecast through
a Monte Carlo (MC) type approach estimates the evolution of the PDFs
by tracking an ensemble of trajectories solved independently from
an initial distribution \cite{leutbecher2008ensemble,robert1999monte,liu2017random}.
The empirical statistics of the ensemble solutions are used to approximate
the model uncertainty due to randomness from various internal and
external sources. A particular issue with large societal impacts is
to accurately capture the non-Gaussian PDFs related to the extreme
event outliers \cite{qi2022anomalous,gao2020transition,chen2020predicting}
using the finite size ensemble. However, the `curse-of-dimensionality'
forbids direct MC simulations of such high-dimensional systems especially
in cases including strongly coupled multiscale nonlinear interactions
\cite{robert1999monte,majda2016introduction}. A very large ensemble
size is usually needed to sufficiently sample the entire coupled fluctuation
modes in a wide energy spectrum, while only a small number of solutions
are affordable in many practical situations such as climate forecast
\cite{salmon1998lectures,gneiting2005weather}. It remains a grand
challenge to obtain accurate statistical estimates for the key physical
quantities from the multiscale interaction between the large-scale
mean flow and the interacting high-dimensional small-scale fluctuations. 

In this paper, we propose an efficient method for the understanding
and ensemble forecast of a general group of complex turbulent systems
accepting coupled multiscale dynamics \cite{majda2018strategies,majda2019linear}.
We use the ideas in the Random Batch Method (RBM) originally developed
for interacting particle systems \cite{jin2020random,jin2021convergence,gao2020data},
and apply it to the very different problem of multiscale turbulent
systems. Inspired by the stochastic gradient descent \cite{robbins1951stochastic,bubeck2015convex,hu2019diffusion}
in machine learning, the RBM randomly divides and constrains the large
number of interacting particles into small batches in each time interval.
The RBM can greatly reduce the computational cost in large particle
systems and has got many successful applications such as on manifold
learning \cite{gao2020transition,gao2020data} and quantum simulations
\cite{jin2020quantum}. In the ensemble prediction of turbulent systems,
we focus on the dominant flow structure in the largest scale thus
the required ensemble size to sample a low-dimensional subspace can
be controlled. This reduced modeling strategy usually suffers difficulties
in practice since the large scale is closely coupled with all the
unresolved small-scale fluctuations, thus it is impossible to only
perform ensemble simulation inside the large-scale subspace. Using
random batches, this difficulty is effectively overcome by regrouping
the large number of small-scale fluctuating modes into small batches
each containing only a few modes. Then the batches from a single simulation
of the fluctuation modes are used for updating different ensemble
members of the large-scale state separately during a short time step
update. This approximation is based on the important observation that
the small-scale fluctuations often decorrelate much faster in time
and contain less energy than the mean-flow state on the largest scale.
The batches of different modes are randomly resampled before each
time updating step thus contributions from all scales are well characterized
during the evolution in time. In this way, we achieve an efficient
algorithm that gains high skill to capture the fully non-Gaussian
statistical feature in the most important large-scale mean flow state,
while greatly reduce the high computational cost independent of the
dimensionality of the full system.

In order to achieve a detailed analysis of the method for the complex
turbulent system which usually combines various complex effects from
different sources, we develop the the new RBM algorithm on a unified
class of turbulent systems with emphasis on the explicit coupling
between large and small scales, while the unresolved nonlinear coupling
among small scales is modeled by damping and white noise forcing.
The simplified formulation reveals the most important key physical
processes on the nonlinear interaction between the large-scale mean
flow component and the smaller scale fluctuation components. On the
other hand, the extra complexity due to the nonlinear self-interactions
among the less important small scales are avoided to provide a cleaner
model setup. The model framework is shown to have many representative
applications in physical and engineering problems \cite{majda2006nonlinear,nicholson1983introduction,vallis2006atmospheric,olbers2001gallery,qi2020dimits}.
Precise error estimations are derived based on this general framework
for the convergence of the RBM approach using a finite number of samples.
The convergence of the statistical quantifies is proved based on the
semigroups generated by the backward Kolmogorov equations \cite{varadhan2007stochastic}
of the RBM model. The RBM is then verified numerically on two representative
prototype turbulent models. Different non-Gaussian statistics are
observed in the two models inferring strong intermittency and extreme
events induced from distict physical mechanisms. The numerical tests
show accurate prediction of both transient and equilibrium PDFs recovering
various non-Gaussian features under a much lower computational cost
requiring a much smaller ensemble size.

First in the following, we display the general structures that are
representative in the turbulent dynamics and illustrate the central
issues in achieving an accurate probability solution.

\vspace{-1em}

\subsection*{General formulation for turbulent systems and challenges in efficient
statistical forecast}

The complex turbulent systems discussed above can be written as the
following canonical equation about the state variable $\mathbf{u}$
in a high-dimensional phase space
\begin{equation}
\frac{\mathrm{d}\mathbf{u}}{\mathrm{d}t}=\Lambda\mathbf{u}+B\left(\mathbf{u},\mathbf{u}\right)+\mathbf{F}\left(t\right)+\boldsymbol{\sigma}\left(t\right)\dot{\mathbf{W}}\left(t;\omega\right).\label{eq:abs_formu}
\end{equation}
On the right hand side of the equation (\ref{eq:abs_formu}), the
first component, $\Lambda=-D+L$, represents linear dissipation and
dispersion effects (with a negative-definite dissipation operator
$D<0$ and a skew-symmetric dispersion operator $L^{T}=-L$ as in
\cite{majda2018strategies}). One representative feature of such complex
systems is the nonlinear energy conserving interaction that transports
energy across scales. The nonlinear effect is introduced through a
bilinear quadratic form, $B\left(\mathbf{u},\mathbf{u}\right)$, that
satisfies the conservation law $\mathbf{u}\cdot B\left(\mathbf{u},\mathbf{u}\right)\equiv0$.
External forcing effects are decomposed into a deterministic component,
$\mathbf{F}\left(t\right)$, and a stochastic component represented
by a Gaussian random process, $\boldsymbol{\sigma}\dot{\mathbf{W}}$. 

The evolution of the model state $\mathbf{u}$ depends on the sensitivity
to the randomness in initial conditions and external stochastic effects.
Combined with the inherent internal instability due to the nonlinear
coupling term in (\ref{eq:abs_formu}), small perturbations are amplified
in time thus requiring a probabilistic description to completely characterize
the development of uncertainty in the model state $\mathbf{u}$. The
time evolution of the PDF $p\left(\mathbf{u},t\right)$ can be found
directly from the solution of the associated Fokker-Planck equation
(FPE) \cite{varadhan2007stochastic}
\begin{equation}
\frac{\partial p}{\partial t}=\mathcal{L}_{\mathrm{FP}}p\coloneqq-\mathrm{div}_{\mathrm{u}}\left[\Lambda\mathbf{u}+B\left(\mathbf{u,u}\right)+\mathbf{F}\right]p+\frac{1}{2}\mathrm{div}_{\mathbf{u}}\nabla\left(\boldsymbol{\sigma}\boldsymbol{\sigma}^{T}p\right),\label{eq:FP}
\end{equation}
with an initial condition $p\mid_{t=0}=\mu_{0}$. However, it is still
a challenging task for directly solving the FPE (\ref{eq:FP}) as
a high dimensional PDE system. As an alternative approach, ensemble
forecast by tracking the Monte-Carlo solutions \cite{robert1999monte}
estimates the essential statistics through empirical averages among
a group of independently sampled trajectories of (\ref{eq:abs_formu}).
In particular, the ensemble members $\left\{ \mathbf{u}^{\left(i\right)}\right\} $
are sampled at the initial time $t=0$ according to the initial distribution
$\mu_{0}$. The PDF solution $p\left(\mathbf{u},t\right)$ at each
time instant $t>0$ is then approximated by evolving each sample independently
in time according to the same dynamical equation.

Though simple to implement, a direct ensemble forecast running the
original model (\ref{eq:abs_formu}) suffers several difficulties
in accurately recovering the key model statistics and PDFs in a high
dimensional space. First, the ensemble size required to achieve desirable
accuracy will grow exponentially in direct ensemble simulation of
the full model as the dimension of the system increases. Second, the
turbulent systems often contain strong internal instability and multiscale
coupling along the entire spectrum. Thus reduced models by simply
truncating the stabilizing small-scale modes \cite{majda2018strategies}
are not feasible to correctly represent the true model dynamics. Besides,
different orders of statistical characteristics are fully coupled
in the general formulation (\ref{eq:abs_formu}) so it is difficulty
to identify the contributions of different scales especially when
highly non-Gaussian statistics appear. These are the central difficulties
we will address in this paper using ideas in the RBM approximation.

In the following part of the paper, we first propose a more tractable
model framework with emphasis on the explicit coupling between large
and small scales in Section \ref{sec:A-turbulent-model}. Based on
this model framework, the RBM algorithm for ensemble prediction is
developed in Section \ref{sec:Random-batch-method}. Detailed error
estimation and convergence analysis of the new RBM method are obtained
in Section \ref{sec:Error-estimate}. The performance of the method
is verified on two prototype turbulence models with practical importance
in Section \ref{sec:Numerical-confirmation}. A summary of this paper
is given in Section \ref{sec:Summary}.

\section{A turbulent model framework with explicit multiscale coupling mechanism\label{sec:A-turbulent-model}}

We first introduce a simplified modeling framework that enables us
to focus on the key nonlinear large and small scale coupling mechanism
in the general system (\ref{eq:abs_formu}). One major difficulty
in complex turbulent systems is the strong nonlinear coupling across
scales where the large-scale state can destabilize the smaller scales
with small variance, while the increased fluctuation energy contained
in the large number of small-scale states can inversely impact the
development of the coherent largest scale structure. To address this
central issue of coupling with mixed scales in modeling turbulent
systems, we propose a mean-fluctuation decomposition of the model
state $\mathbf{u}$, so that the multiscale interactions can be identified
in a natural way. To achieve this, we view $\mathbf{u}$ as a random
field and separate it into the composition of a large-scale random
mean state and stochastic fluctuations in a finite-dimensional representation
under a pre-determined basis $\left\{ \mathbf{v}_{k}\right\} _{k=1}^{K}$
(for example, the Fourier expansion offers a natural basis for periodic
boundary)
\begin{equation}
\mathbf{u}\left(t;\omega\right)=\bar{\mathbf{u}}+\mathbf{u}^{\prime}\coloneqq\bar{\mathbf{u}}\left(t;\omega\right)+\frac{1}{\sqrt{K}}\sum_{k=1}^{K}Z_{k}\left(t;\omega\right)\mathbf{v}_{k},\label{eq:spec_expansion}
\end{equation}
where we use the overbar `$\overline{\bullet}$' to denote a proper
average operator. In this way, $\bar{\mathbf{u}}\left(t;\omega\right)$
represents the random mean field of the dominant largest scale structure
(for example, the zonal jets in geophysical turbulence \cite{majda2006nonlinear}
or the coherent radial flow in fusion plasmas \cite{diamond2005zonal});
and $\mathbf{Z}\left(t;\omega\right)=\left\{ Z_{k}\left(t;\omega\right)\right\} _{k=1}^{K}$
are stochastic coefficients measuring the uncertainty in multiscale
fluctuation processes $\mathbf{u}^{\prime}$ on the eigenmodes $\mathbf{v}_{k}$
(with a zero averaged mean $\overline{\mathbf{u}^{\prime}}=0$). Usually,
$\bar{\mathbf{u}}$ can be represented in a much lower dimension $d$
than the dimension $K$ of full stochastic modes $\mathbf{Z}$ representing
fluctuations along a wide spectrum of scales. The state decomposition
(\ref{eq:spec_expansion}) enables us to analyze the individual contributions
from different scale modes to the large and small scale dynamics.
Therefore, we derive the governing equations for the mean state and
fluctuation modes separately according to the decomposition (\ref{eq:spec_expansion}).

First, by averaging over the original equation (\ref{eq:abs_formu})
and applying the mean-fluctuation decomposition (\ref{eq:spec_expansion}),
the \emph{evolution equation of the large-scale mean state}\textbf{
$\bar{\mathbf{u}}$} is given by the following dynamics\addtocounter{equation}{0}\begin{subequations}\label{eq:dyn_multi}
\begin{equation}
\frac{\mathrm{d}\bar{\mathbf{u}}}{\mathrm{d}t}=\Lambda\bar{\mathbf{u}}+B\left(\bar{\mathbf{u}},\bar{\mathbf{u}}\right)+\frac{1}{K}\sum_{k,l=1}^{K}Z_{k}Z_{l}\bar{B}\left(\mathbf{v}_{k},\mathbf{v}_{l}\right)+\mathbf{F}.\label{eq:mean_dyn}
\end{equation}
Above, the small-scale nonlinear fluctuating feedback to the large-scale
mean dynamics is represented by the quadratic coupling $Z_{k}Z_{l}$
with the coupling coefficients $\bar{B}\left(\mathbf{v}_{k},\mathbf{v}_{l}\right)$.
The term $B\left(\bar{\mathbf{u}},\bar{\mathbf{u}}\right)$ represents
the self-interaction within the mean state. Next, by projecting the
fluctuation equation to each orthogonal basis element $\mathbf{v}_{i}$
we obtain the \emph{evolution equation for the stochastic fluctuation
coefficients}
\begin{equation}
\frac{\mathrm{d}Z_{k}}{\mathrm{d}t}=\frac{1}{K}\sum_{l=1}^{K}\gamma_{kl}\left(\bar{\mathbf{u}}\right)Z_{l}+\frac{1}{K^{3/2}}\sum_{m,n=1}^{K}Z_{m}Z_{n}B^{\prime}\left(\mathbf{v}_{m},\mathbf{v}_{n}\right)\cdot\mathbf{v}_{k}+\frac{\sigma\left(t\right)}{K^{1/2}}\dot{\mathbf{W}}\left(t;\omega\right)\cdot\mathbf{v}_{k},\label{eq:coeff_dyn}
\end{equation}
\end{subequations}where $\gamma_{kl}\left(\bar{\mathbf{u}}\right)=\left[\Lambda\mathbf{v}_{l}+B^{\prime}\left(\bar{\mathbf{u}},\mathbf{v}_{l}\right)+B^{\prime}\left(\mathbf{v}_{l},\bar{\mathbf{u}}\right)\right]\cdot\mathbf{v}_{k}$
characterizes the quasilinear coupling from the mean state $\bar{\mathbf{u}}$
in the fluctuation modes $\mathbf{Z}$. The interactions between the
fluctuation modes in different scales are summarized in the second
term on the right hand side of (\ref{eq:coeff_dyn}) with the fluctuation
coefficients $B^{\prime}=B-\bar{B}$. In addition, without loss of
generality, we assume that the deterministic forcing $\mathbf{F}$
exerts on the large-scale mean state, while the fluctuation modes
are subject to the coupled white noise forcing.

Still, the fully coupled mean-fluctuation model \eqref{eq:dyn_multi}
contains multiple linear and nonlinear interaction components involving
both large-scale mean $\bar{\mathbf{u}}$ and small-scale fluctuations
$\mathbf{u}^{\prime}$, thus it may not be a desirable starting model
for identifying the central dynamics in multiscale interactions. Rather,
we would like to propose a further simplified model based on mean-fluctuation
interaction mechanism which only maintains the key large and small
scale interaction explicitly while eliminates the large number of
small-scale self-interaction terms. Considering this, we assume that
the combined nonlinear feedback among different small-scale modes
(\ref{eq:coeff_dyn}) can be parameterized by independent damping
and stochastic forcing. This leads to the\emph{ simplified multiscale
model with large-small scale interaction}
\begin{equation}
\begin{aligned}\frac{\mathrm{d}\bar{\mathbf{u}}}{\mathrm{d}t}=\; & V\left(\bar{\mathbf{u}}\right)+\frac{1}{K}\sum_{k,l=1}^{K}Z_{k}Z_{l}\bar{B}\left(\mathbf{v}_{k},\mathbf{v}_{l}\right)+\mathbf{F},\\
\frac{\mathrm{d}Z_{k}}{\mathrm{d}t}= & \frac{1}{K}\sum_{l=1}^{K}\gamma_{kl}\left(\bar{\mathbf{u}}\right)Z_{l}\;-d_{k}Z_{k}+\sigma_{k}\dot{W}_{k},\qquad k=1,\cdots,K.
\end{aligned}
\label{eq:model_decoupled}
\end{equation}
Above, the linear and nonlinear effects within the mean state, $\Lambda\bar{\mathbf{u}}$
and $B\left(\bar{\mathbf{u}},\bar{\mathbf{u}}\right)$, are summarized
in a single term $V\left(\bar{\mathbf{u}}\right)$. The model simplification
occurs in the small-scale dynamics (\ref{eq:coeff_dyn}) for $Z_{k}$
which replaces the combined nonlinear small-scale coupling $Z_{m}Z_{n}B^{\prime}\left(\mathbf{v}_{m},\mathbf{v}_{n}\right)\cdot\mathbf{v}_{k}$
by equivalent damping and noise with parameters $d_{k},\sigma_{k}$.
In fact, this replaced term represents the higher-order moment feedback
to the covariance dynamics from a detailed statistical analysis of
the moment equations \cite{majda2018strategies}. The simplification
is derived from the important observation that these small-scale modes
are fast mixing (thus decorrelate fast in time) so the average of
the large number of modes plays an equivalent role as the linear damping
and white noise as in the homogenization theory. The introduced model
parameters $d_{k},\sigma_{k}$ are usually picked according to the
equilibrium statistical spectrum $E_{k}=\frac{\sigma_{k}^{2}}{2d_{k}}=E_{0}\left|k\right|^{-s}$
\cite{majda2006nonlinear} with $s$ determining the energy decaying
rate in the fluctuation modes. 

The resulting model (\ref{eq:model_decoupled}) recovers the most
essential coupling mechanism between the large-scale mean state $\bar{\mathbf{u}}$
and the small-scale fluctuation modes $Z_{k}$ which is explicitly
modeled through the quasi-linear operator $\gamma_{kl}\left(\bar{\mathbf{u}}\right)$
in the fluctuation equations and quadratic feedback term in the mean
equation. Notice that the strong internal instability which is the
key feature of turbulent systems is maintained in both the mean and
fluctuation modes by coupling terms $V$ and $\gamma_{kl}$ in (\ref{eq:mean_dyn})
and (\ref{eq:coeff_dyn}). The only model approximation comes from
parameterization of the complicated self-interactions of small-scale
fluctuations. Using this simplified model avoids the various sources
of uncertainties from the fluctuation scales so that the dominant
large-small scale interaction is identified. The thorough analysis
of the fully coupled nonlinearly fluctuation model (\ref{eq:coeff_dyn})
will be left for the future study.

In addition, the multiscale model formulation (\ref{eq:coeff_dyn})
enjoys the advantage of more flexibility to run ensemble simulations
for statistical forecast, uncertainty quantification and data assimilation
in practical applications. A wide variety of turbulent systems \cite{olbers2001gallery,majda2018model}
can be categorized into this framework so that it has wide validity
in developing the efficient ensemble forecast methods. Two typical
prototype models accepting the dynamical structure (\ref{eq:model_decoupled})
with a wide multiscale spectrum will be discussed in Section \ref{sec:Numerical-confirmation}
displaying a wide variety of different turbulent features. In the
following sections, we will develop efficient ensemble forecast strategies
based on this representative turbulent model formulation (\ref{eq:model_decoupled}). 

\section{Random batch method for ensemble forecast of turbulent models\label{sec:Random-batch-method}}

Next, we propose an efficient ensemble forecast method to describe
the time evolution of the probability distribution of the state $\mathbf{u}$.
The direct Monte-Carlo approach runs an ensemble simulation using
$N$ independent samples $\mathbf{u}^{\left(i\right)}=\left\{ \bar{\mathbf{u}}^{\left(i\right)},\mathbf{Z}^{\left(i\right)}\right\} ,i=1,\cdots,N$,
with $\bar{\mathbf{u}}\in\mathbb{R}^{d}$ the large-scale mean state
and $\mathbf{Z}=\left\{ Z_{k}\right\} _{k=1}^{K}\in\mathbb{R}^{K}$
the entire small-scale fluctuation modes in a high dimensional space
$K\gg d$. The samples are drawn from the initial distribution $\mathbf{u}^{\left(i\right)}\left(0\right)\sim\mu_{0}\left(\mathbf{u}\right)$
at the starting time $t=0$, and the time-dependent solution of each
sample $\mathbf{u}^{\left(i\right)}\left(t\right)$ is achieved by
solving the equation (\ref{eq:model_decoupled}) independently in
time. The resulting PDF at each time instant $t$ is approximated
by the empirical ensemble representation,
\begin{equation}
p\left(\mathbf{u},t\right)\simeq p^{\mathrm{MC}}\left(\mathbf{u},t\right)\coloneqq\frac{1}{N}\sum_{i=1}^{N}\delta\left(\mathbf{u}-\mathbf{u}^{\left(i\right)}\left(t\right)\right),\quad\mathbf{u}\in\mathbb{R}^{d+K}.\label{eq:pdf_empirical}
\end{equation}
Associated with the PDF, the statistical expectation of any function
$\varphi\left(\mathbf{u}\right)$ can be estimated by the empirical
average of the samples according to (\ref{eq:pdf_empirical})
\[
\mathbb{E}^{p}\varphi_{t}\left(\mathbf{u}\right)\simeq\mathbb{E}^{\mathrm{MC}}\varphi_{t}\left(\mathbf{u}\right)=\frac{1}{N}\sum_{i=1}^{N}\varphi\left(\mathbf{u}^{\left(i\right)}\left(t\right)\right).
\]
In particular for the model (\ref{eq:model_decoupled}), all the small-scale
modes contribute to the mean state equation as a combined feedback.
As a result, even though we are mostly interested in the statistics
from the mean state samples $\bar{\mathbf{u}}^{\left(i\right)}$ in
the relatively low dimensional subspace, the solution of entire $K$
small-scale modes $\mathbf{Z}^{\left(i\right)}$ must be computed.
The $K$-dimensional equations for fluctuation modes also need to
be solved repeatedly $N$ times for all the samples $i=1,\cdots,N$.
Thus, the direct ensemble method reaches a high computational cost
of $O\left(NK^{2}\left(d+1\right)\right)$ for one time step update.
Furthermore, the required number of samples $N$ to maintain accuracy
in the empirical PDF (\ref{eq:pdf_empirical}) is also dependent on
the system dimension $\left(d+K\right)$ and will grow exponentially
as $K$ increases (known as the curse-of-dimensionality \cite{daum2003curse,friedman1997bias}).
Therefore, this direct MC approach will quickly become computational
unaffordable as a larger $N$ is needed to resolve all the detailed
small scale fluctuations. 

Here, we propose to reduce the computational cost in ensemble simulations
of the turbulent models using the idea in the effective random batch
method (RBM) \cite{jin2020random,jin2021convergence}. We focus on
the ensemble sampling of the most dominant mean state statistics $\bar{\mathbf{u}}\in\mathbb{R}^{d}$
in a much lower dimensional subspace $d\ll K$. Thus accurate empirical
estimation of the marginal probability distribution of $\bar{\mathbf{u}}$
in (\ref{eq:model_decoupled}) can be reached using a much smaller
ensemble size $N_{1}\ll N$ 
\begin{equation}
p^{\mathrm{RBM}}\left(\bar{\mathbf{u}}\right)=\frac{1}{N_{1}}\sum_{i=1}^{N_{1}}\delta\left(\bar{\mathbf{u}}-\bar{\mathbf{u}}^{\left(i\right)}\right),\quad\bar{\mathbf{u}}\in\mathbb{R}^{d}.\label{eq:pdf_rbm}
\end{equation}
Accordingly, the expectation in the resolved mean state is computed
by the empirical estimation, $\mathbb{E}^{\mathrm{RBM}}\varphi_{t}\left(\bar{\mathbf{u}}\right)=\frac{1}{N_{1}}\sum_{i}\varphi\left(\bar{\mathbf{u}}^{\left(i\right)}\left(t\right)\right)$.
In the main idea of the RBM model, we no longer run the associated
large ensemble simulation of the full fluctuation modes $\left\{ \mathbf{Z}^{\left(i\right)}\right\} \in\mathbb{R}^{K\times N}$
associated with each mean state sample $\bar{\mathbf{u}}^{\left(i\right)}$.
Instead, only one stochastic trajectory of $\mathbf{Z}\left(t\right)$
is solved in time while the $K$ spectral modes are divided into smaller
subgroups (batches) for updating different ensemble members $\bar{\mathbf{u}}^{\left(i\right)}$.
The RBM approximation is introduced considering the typical property
of the turbulent system where the energy inside the single small-scale
mode $\mathbb{E}\left|Z_{k}\right|^{2},k\gg1$ decays fast and decorrelates
rapidly in time (see examples in Figure \ref{fig:Equilibrium-energy-spectrum}
and \ref{fig:Equilibrium-energy-topo} of Section \ref{sec:Numerical-confirmation}).
On the other hand, ergodicity in the stochastic fluctuation modes
\cite{weinan2001gibbsian,mattingly2002ergodicity} implies that updating
the mean state $\bar{\mathbf{u}}$ using fractional fluctuation modes
at each time step with consistent time-averaged feedback can provide
an equivalent total contribution without altering the original statistical
equations. 

Next, we describe the detailed RBM approach for modeling turbulent
systems. To accurately quantify the small-scale feedback in the mean
state dynamics, we introduce a partition $\mathcal{I}^{n}=\left\{ \mathcal{I}_{i}^{n}\right\} $
of the mode index $k=1,\cdots,K$ at the start of each time updating
step $t=t_{n-1}$. Thus, the full spectrum of modes is divided into
$N_{1}$ small batches of size $p=\left|\mathcal{I}_{i}^{n}\right|$
according to the total number of samples $i=1,\cdots,N_{1}$ of $\bar{\mathbf{u}}^{\left(i\right)}$
\begin{equation}
\cup_{i=1}^{N_{1}}\mathcal{I}_{i}^{n}=\left\{ k:1\leq k\leq K\right\} .\label{eq:partition}
\end{equation}
Then the $i$-th sample of the mean state $\bar{\mathbf{u}}^{\left(i\right)}$
is updated only using the fluctuation modes $\left\{ Z_{k}\right\} $
whose indices belong to the batch $k\in\mathcal{I}_{i}^{n}$ during
the time interval $t\in\left(t_{n-1},t_{n}\right]$ with time integration
step $\tau=t_{n}-t_{n-1}$\addtocounter{equation}{0}\begin{subequations}\label{eq:dyn_rbm}
\begin{equation}
\frac{\mathrm{d}\bar{\mathbf{u}}^{\left(i\right)}}{\mathrm{d}t}=V\left(\bar{\mathbf{u}}^{\left(i\right)}\right)+\sum_{k,l\in\mathcal{I}_{i}^{n}}c_{kl}Z_{k}Z_{l}\bar{B}\left(\mathbf{v}_{k},\mathbf{v}_{l}\right)+\mathbf{F}.\label{eq:dyn1_rbm}
\end{equation}
Correspondingly, only the modes $\left\{ Z_{k}\right\} $ whose indices
$k$ in the batch $\mathcal{I}_{i}^{n}$ are updated using the $i$-th
mean state solution
\begin{equation}
\frac{\mathrm{d}Z_{k}}{\mathrm{d}t}=\frac{1}{p}\sum_{l\in\mathcal{I}_{i}^{n}}\gamma_{kl}\left(\bar{\mathbf{u}}^{\left(i\right)}\right)Z_{l}\;-d_{k}Z_{k}+\sigma_{k}\dot{W}_{k},\quad k\in\mathcal{I}_{i}^{n}.\label{eq:dyn2_rbm}
\end{equation}
\end{subequations}Above, the coupled feedback of small-scale modes
in the batch is rescaled by the new combining coefficients $c_{kl}$
\begin{equation}
c_{kl}=\begin{cases}
\frac{1}{p}, & k=l,\\
\frac{1}{p}\frac{K-1}{p-1}, & k\ne l.
\end{cases}\label{eq:weight_rbm}
\end{equation}
that will be explained in Section \ref{sec:Error-estimate}. Equation
(\ref{eq:dyn1_rbm}) and (\ref{eq:dyn2_rbm}) combining all the batches
$i=1,\cdots,N_{1}$ give the complete formulation of the \emph{random
batch model for statistical ensemble forecast of turbulent systems}
during time interval $\left(t_{n-1},t_{n}\right]$. Then the batches
are regrouped again at the start of the next time step $t=t_{n}$
to repeat this procedure. Through the RBM reduction, only a very small
number of modes $p$ is needed in each batch $i$
\[
p=\frac{K}{N_{1}}\geq1.
\]
In practice, it is shown that the batch size $p$ can be sufficiently
small (for example, we take $p=5$ or even $p=2$ compared with $K=O\left(100\right)$
for the numerical tests in Section \ref{sec:Numerical-confirmation}).
Through the shared modes among batches, the computational cost of
the RBM model (\ref{eq:dyn_rbm}) is effectively reduced to $O\left(N_{1}dp^{2}\right)=O\left(dpK\right)$.
Especially, the cost won't have the exponential growth depending on
the full dimension $d+K$ of the system since the samples of size
$N_{1}$ are only used to capture statistics in the low dimensional
state $\bar{\mathbf{u}}\in\mathbb{R}^{d}$.

We summarize the random batch method for ensemble simulation of high
dimensional turbulent system in the following algorithm:

\begin{algorithm}
\begin{algorithmic}[1] 
\Require{At initial time $t=t_{0}$, draw random samples from the initial distribution $\bar{\mathbf{u}}^{\left(i\right)}\left(t_0\right)\sim\mu_{0}\left(\bar{\mathbf{u}}\right)$ for $i=1,\cdots,N_{1}$, and one associated sample for $\left\{ Z_{k}\left(t_0\right)\right\} _{k=1}^{K}$.}

\For{$n = 1$ while $n < T/\tau$, at the start of the time interval $t\in\left(t_{n-1},t_{n}\right]$  with time step $\tau=t_{n}-t_{n-1}$}
	\State{Get samples $\bar{\mathbf{u}}^{\left(i\right)}\left(t_{n-1}\right)$ and spectral modes $\left\{Z_{k}\left(t_{n-1}\right)\right\} _{k=1}^{K}$ from all output batches of previous time step;}
	\State{Partition the modes into $N_1$ batches $\mathcal{Z}_{i}=\left\{ Z_{k}\right\} _{k\in\mathcal{I}_{i}^{n}}$ with $\cup_{i=1}^{N_1}\mathcal{Z}_{i}=\left\{ Z_{k}\right\} _{k=1}^{K}$ according to \eqref{eq:partition};}
	\State{Update the samples $\bar{\mathbf{u}}^{\left(i\right)}\left(t_{n}\right)$ and the spectral modes $\left\{Z_{k}\left(t_{n}\right)\right\} _{k\in\mathcal{I}_{i}^{m}}$ in each batch $i$ to the next time step by solving the equations \eqref{eq:dyn_rbm}.}
\EndFor

\end{algorithmic}

\caption{RBM model for ensemble forecast of turbulent systems\label{alg:Sample-RBM}}
\end{algorithm}
\begin{rem*}
1. The total number of batches $N_{1}$ is associated with the number
of small-scale fluctuation modes $K$. In the RBM model \eqref{eq:dyn_rbm},
we assume that the number of fluctuation modes is large enough $K\gg1$
so that it can offer sufficient samples $N_{1}=K/p$ to approximate
the PDF of $\bar{\mathbf{u}}$. Sometimes, it is useful to expand
the sample size to $N_{1}=n_{2}K/p$ by running a small number $n_{2}$
of samples for the fluctuation modes $Z_{k}^{\left(j\right)}$ with
$j=1,\cdots,n_{2}$. The sampled solutions are divided into small
batches of size $p$ according to the index $j$ in the same way as
before. The batches can also change size $p_{i}=\left|\mathcal{I}_{i}^{n}\right|$
according to the energy in each mode.

2. In practical applications to estimate the marginal distribution
(\ref{eq:pdf_rbm}), we can even extend the number of the resolved
modes for the ensemble simulation to also include the leading (unstable)
fluctuation modes $Z_{k},k\leq K_{1}$. The numerical scheme can be
easily extended to this case following exactly the same steps of Algorithm
\ref{alg:Sample-RBM} and is shown in the explicit examples in Appendix
\ref{appen2:Detailed-RBM-formulation}.
\end{rem*}

\section{Error estimate of the random batch algorithm for ensemble prediction\label{sec:Error-estimate}}

In this section, we analyze the approximation error in the RBM model
in Algorithm \ref{alg:Sample-RBM} compared with the direct Monte-Carlo
approach of the turbulent model. Ensemble simulation of the coupled
large-small scale system (\ref{eq:model_decoupled}) is considered
for probabilistic prediction of the large-scale mean state $u\in\mathbb{R}^{d}$.
For the direct MC model, the governing equations for each ensemble
member of the full states $\left\{ u^{\left(i\right)},Z_{k}^{\left(i\right)}\right\} ,i=1,\cdots,N$
can be expressed as
\begin{equation}
\begin{aligned}\frac{\mathrm{d}u^{\left(i\right)}}{\mathrm{d}t}= & \;V\left(u^{\left(i\right)}\right)+\frac{1}{K}\sum_{k,l}Z_{k}^{\left(i\right)}Z_{l}^{\left(i\right)}B_{kl}+F,\\
\frac{\mathrm{d}Z_{k}^{\left(i\right)}}{\mathrm{d}t}= & \;\gamma_{k}\left(u^{\left(i\right)}\right)Z_{k}^{\left(i\right)}\;-d_{k}Z_{k}^{\left(i\right)}+\sigma_{k}\dot{W}_{k}^{\left(i\right)},\qquad1\leq k\leq K,
\end{aligned}
\label{eq:mc_analysis}
\end{equation}
with the multiscale coupling coefficient $B_{kl}=B\left(\mathbf{v}_{k},\mathbf{v}_{l}\right)$,
and self-interaction inside the mean state summarized in $V\left(u\right)$.
For simplicity, we adopt the diagonal coupling coefficients $\gamma_{k}$
in the fluctuation equation for $Z_{k}$ which can be achieved by
a proper change of basis from (\ref{eq:model_decoupled}). An ensemble
of samples $i=1,\cdots,N$ is drawn from the initial distribution
in the direct MC simulation. For each sample $i$, the mean state
$u^{\left(i\right)}$ is coupled with all the small-scale fluctuation
modes $\left\{ Z_{k}^{\left(i\right)}\right\} \in\mathbb{R}^{K}$
of the entire spectrum $k\leq K$ (with $K\gg1$ in a high dimensional
phase space). The samples are updated independently for each $i$
during the time evolution of solutions. The randomness in the model
(\ref{eq:mc_analysis}) comes from the small-scale white noise and
the uncertainty from the initial samples $u^{\left(i\right)}\left(0\right)\sim\mu_{0}$.

In contrast, the RBM model \eqref{eq:dyn_rbm} updates each mean state
sample $\tilde{u}^{\left(i\right)}$ using only a portion of small-scale
modes inside the corresponding batch $\mathcal{I}_{i}^{m}\subseteq\left\{ k:1\leq k\leq K\right\} $
at the time updating step $t_{m}$. For distinction in notations,
we use states $\left\{ \tilde{u},\tilde{Z}_{k}\right\} $ with `tildes'
to represent the RBM solutions compared with the full MC states $\left\{ u,Z_{k}\right\} $.
The dynamics of the $i$-th batch $\left\{ \tilde{u}^{\left(i\right)},\tilde{Z}_{k}\right\} _{k\in\mathcal{I}_{i}^{m}}$
during the time interval $t\in\left(t_{m-1},t_{m}\right]$ can be
formulated as
\begin{equation}
\begin{aligned}\frac{\mathrm{d}\tilde{u}^{\left(i\right)}}{\mathrm{d}t}= & \;V\left(\tilde{u}^{\left(i\right)}\right)+\sum_{k,l\in\mathcal{I}_{i}^{m}}c_{kl}\tilde{Z}_{k}\tilde{Z}_{l}B_{kl}+F,\\
\frac{\mathrm{d}\tilde{Z}_{k}}{\mathrm{d}t}= & \;\gamma_{k}\left(\tilde{u}^{\left(i\right)}\right)\tilde{Z}_{k}\;-d_{k}\tilde{Z}_{k}+\sigma_{k}\dot{\tilde{W}}_{k},\qquad k\in\mathcal{I}_{i}^{m},
\end{aligned}
\label{eq:rbm_analysis}
\end{equation}
with samples only drawn for the large-scale mean state $\tilde{u}^{\left(i\right)},i=1,\cdots,N$
(for simplicity, we take $N_{1}=N$ in Algorithm \ref{alg:Sample-RBM}).
Only one realization of the fluctuation modes $\tilde{Z}_{k},k\leq K$
is needed and the $K$ modes are shared among the $N$ mean state
samples $\tilde{u}^{\left(i\right)}$. At the start of each updating
time $t=t_{m-1}$, the modes $\left\{ \tilde{Z}_{k}\right\} $ are
randomly regrouped into new batches $\cup_{i}\mathcal{I}_{i}^{m}=\left\{ k:k\leq K\right\} $
of size $p\left(=\frac{K}{N}=O\left(1\right)\right)$. Only the modes
belonging to the $i$-th batch, $\tilde{Z}_{k},k\in\mathcal{I}_{i}^{m}$,
are updated using the $i$-th mean state $\tilde{u}^{\left(i\right)}$,
and only these modes in batch $i$ are used to update the mean state
sample $\tilde{u}^{\left(i\right)}$. Due to the batch approximation,
new coupling coefficients $c_{kl}$ in (\ref{eq:weight_rbm}) are
introduced. Notice that in this RBM model (\ref{eq:rbm_analysis}),
different samples $\tilde{u}^{\left(i\right)}$ are no longer independent
with each other during the entire time evolution since they are linked
by the batch resampling at the start of each time update. Therefore,
the RBM model has the additional randomness from the random partition
of modes $\left\{ \mathcal{I}^{m}\right\} _{m=1}^{\left[T/\tau\right]}$
with $\tau=t_{m}-t_{m-1}$ the discrete time step size. 

\subsection{Main theorem for the approximation error in the RBM ensemble model}

In the ensemble forecast, we are interested in recovering the statistics
through the empirical PDF (\ref{eq:pdf_empirical}) rather than each
individual trajectory solution. One effective way to calibrate the
statistical error is to compare the difference between the ensemble
averaged $\frac{1}{N}\sum_{i=1}^{N}\mathbb{E}\varphi\left(u^{\left(i\right)}\left(t_{n}\right)\right)$
and $\frac{1}{N}\sum_{i=1}^{N}\mathbb{E}\varphi\left(\tilde{u}^{\left(i\right)}\left(t_{n}\right)\right)$
from model (\ref{eq:mc_analysis}) and (\ref{eq:rbm_analysis}) for
any test function $\varphi\in C_{b}^{2}$. We propose the following
structural assumptions for the coupling parameters of the models (\ref{eq:mc_analysis})
and (\ref{eq:rbm_analysis}):
\begin{assumption}
\label{assu:assump_coeffs}In the mean state equation for $u$, suppose
that the bilinear coupling coefficients $B_{kl}$ in the mean state
dynamics are uniformly bounded\addtocounter{equation}{0}\begin{subequations}\label{eq:assump}
\begin{equation}
\max_{1\leq k,l\leq K}B_{kl}\leq C.\label{eq:assump1}
\end{equation}
And the self-coupling term $V\left(u\right)$ and its derivatives
up to the second order are uniformly bounded
\begin{equation}
\left\Vert V\right\Vert _{C^{2}}=\sum_{\left|\alpha\right|\leq2}\left\Vert \nabla^{\alpha}V\right\Vert _{\infty}\leq C.\label{eq:assump2}
\end{equation}
In the fluctuation equations for $\left\{ Z_{k}\right\} $, we assume
that there is no internal instability induced by $u$ in all small-scale
modes and the total noise amplitude is bounded. That is, there is
a positive constant $r$ independent of $u$, so that
\begin{equation}
\min_{1\leq k\leq K}\left\{ d_{k}-\gamma_{k}\left(u\right)\right\} \geq r>0,\quad\mathrm{and}\quad\sum_{k=1}^{K}\sigma_{k}^{2}\leq C.\label{eq:assump3}
\end{equation}
\end{subequations}
\end{assumption}

Assumption (\ref{eq:assump1}) is natural from the definition of the
quadratic bilinear form $B_{kl}=B\left(\mathbf{v}_{k},\mathbf{v}_{l}\right)$,
and assumption (\ref{eq:assump2}) for the self-coupling term $V\left(u\right)$
makes sure that this term does not have a rapid growth (this can be
guaranteed by imposing constraints on the maximum value of $u$).
Assumption (\ref{eq:assump3}) implies that the mean state induces
no internal instability to the fluctuation equations through the coupling
with the small-scale modes. This requires that we only partition the
stable small-scale modes into random batches for the time updating
in Algorithm \ref{alg:Sample-RBM}.

Under these assumptions, we have the following main theorem characterizing
the statistical error in the RBM ensemble prediction:
\begin{thm}
\label{thm:err_estimate}With Assumption \ref{assu:assump_coeffs}
satisfied, the empirical statistical estimation of the random batch
model (\ref{eq:pdf_rbm}) with discrete time step $\tau$ converges
to that of the full model solution (\ref{eq:pdf_empirical}) up to
final time $t=T$ as
\begin{equation}
\sup_{n\tau\leq T}\left|\frac{1}{N}\sum_{i=1}^{N}\mathbb{E}\varphi\left(\tilde{u}_{n}^{\left(i\right)}\right)-\mathbb{E}\varphi\left(u_{n}\right)\right|\leq C_{\varphi}\left(T\right)\tau,\label{eq:err_bnd}
\end{equation}
with any $\varphi\in C_{b}^{2}$, and $u_{n}=u\left(n\tau\right)$
by solving (\ref{eq:mc_analysis}), $\tilde{u}_{n}=\tilde{u}\left(n\tau\right)$
by solving (\ref{eq:rbm_analysis}). $C_{\varphi}$ is independent
of the sample size $N$ and the fluctuation modes dimension $K$.
\end{thm}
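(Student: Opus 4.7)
The plan is to adapt the semigroup-based random-batch error analysis of \cite{jin2021convergence} to the multiscale setting of (\ref{eq:mc_analysis})--(\ref{eq:rbm_analysis}). I will work on the joint state space of the coupled ensemble, using the test function
\[
\Phi(\mathbf{u},\mathbf{Z}) = \frac{1}{N}\sum_{i=1}^N \varphi(u^{(i)}),
\]
which satisfies $\|\Phi\|_{C^2}\le \|\varphi\|_{C^2}$ uniformly in $N$. The key preliminary is the \emph{consistency of the batch weights}: for a uniformly random partition into batches of size $p$, one has $\mathbb{P}(k\in \mathcal{I}_i^m)=p/K$ and $\mathbb{P}(k,l\in \mathcal{I}_i^m,\,k\ne l)=p(p-1)/[K(K-1)]$, and the choice (\ref{eq:weight_rbm}) then yields
\[
\mathbb{E}_{\mathcal{I}^m}\Bigl[\sum_{k,l\in \mathcal{I}_i^m} c_{kl}\,Z_k Z_l B_{kl}\Bigr] = \frac{1}{K}\sum_{k,l=1}^K Z_k Z_l B_{kl}.
\]
Consequently the RBM infinitesimal generator $\tilde{\mathcal{L}}^{\mathcal{I}^m}$ agrees in expectation with the full MC generator $\mathcal{L}$ on any function depending only on the mean components, which is precisely what makes the weights (\ref{eq:weight_rbm}) the right choice.

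Next I will introduce the backward Kolmogorov semigroup $P_t$ for (\ref{eq:mc_analysis}) and the one-step RBM transition operator $\tilde Q^m\psi(x) = \mathbb{E}[\psi(\tilde X_m)\mid \tilde X_{m-1}=x]$, where the expectation is taken over both the Brownian increments \emph{and} the random partition $\mathcal{I}^m$. Telescoping gives
\[
\mathbb{E}\Phi(\tilde X_n)-\mathbb{E}\Phi(X_n) = \sum_{m=1}^{n}\mathbb{E}\bigl[(\tilde Q^m-P_\tau)\,P_{(n-m)\tau}\Phi\bigr](\tilde X_{m-1}),
\]
with the Duhamel expansion $\tilde Q^m-P_\tau = \int_0^\tau \mathbb{E}_{\mathcal{I}^m}\bigl[\tilde P^{\mathcal{I}^m}_{\tau-s}(\tilde{\mathcal{L}}^{\mathcal{I}^m}-\mathcal{L})P_s\bigr]\,ds$. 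Setting $\psi=P_{(n-m)\tau}\Phi$ and Taylor-expanding $P_s\psi = \psi+s\mathcal{L}\psi+O(s^2)$, the leading term vanishes by consistency since $\mathbb{E}_{\mathcal{I}^m}[(\tilde{\mathcal{L}}^{\mathcal{I}^m}-\mathcal{L})\psi]=0$, so each per-step error is $O(\tau^2)$ times the $C^2$-norm of $\psi$; summing over the $O(T/\tau)$ steps then produces the $O(\tau)$ bound (\ref{eq:err_bnd}).

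To close the argument, I will establish uniform-in-$K$ bounds on (i) the moments of $\{Z_k\}$ and (ii) the $C^2$-norms of $P_s\Phi$. For (i), Assumption \ref{assu:assump_coeffs}(\ref{eq:assump3}) reduces each fluctuation equation to a contractive linear SDE with rate at least $r>0$ and noise amplitude $\sigma_k^2\le \sum_j\sigma_j^2\le C$, yielding $\sup_{t\le T}\mathbb{E}|Z_k(t)|^{2q}\le C_q$ independent of $k$ and $K$; together with $\max_{k,l}|B_{kl}|\le C$ from (\ref{eq:assump1}), this controls the $L^2$-norm of the centered quadratic form $(\tilde{\mathcal{L}}^{\mathcal{I}^m}-\mathcal{L})\psi$. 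For (ii), I will use the tangent-flow representation of derivatives of $P_s\Phi$: derivatives in $u^{(i)}$ and $Z_k$ solve linearized equations along the trajectory, and the dissipation $d_k-\gamma_k(u)\ge r$ combined with $\|V\|_{C^2}\le C$ from (\ref{eq:assump2}) produces a uniform-in-$K$ Gr\"onwall bound on these derivative flows.

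The hard part will be precisely this $C^2$ control in the high-dimensional fluctuation coordinates: although $\Phi$ depends only on $u$-variables at time zero, the multiplicative coupling $\gamma_k(u)Z_k$ transfers $u$-derivatives of $P_s\Phi$ into $Z_k$-factors inside the expectation, so the a priori estimate must marry the Lyapunov moment bound from step (i) with the tangent-flow contractivity from step (ii). Once this bound is secured uniformly in $K$, the remaining ingredients---random-partition consistency, Duhamel's identity, and discrete Gr\"onwall over $n\le T/\tau$ steps---are routine, and they produce a constant $C_\varphi(T)$ that grows at most exponentially in $T$ but is independent of both the ensemble size $N$ and the spectral dimension $K$, completing the proof of (\ref{eq:err_bnd}).
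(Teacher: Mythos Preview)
Your plan shares the basic architecture of the paper's proof---random-partition consistency of the weights $c_{kl}$, a one-step $O(\tau^2)$ local error, and telescoping with $L^\infty$ contraction---but the key technical device is different, and this difference is exactly what makes your ``hard part'' unnecessary.

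The paper does \emph{not} work with the full backward Kolmogorov semigroup on the joint $(u,Z)$ space. Instead it conditions on the realization of the fluctuation process $Z(\cdot)$ and exploits the fact that, given $Z$, the mean-state dynamics in (\ref{eq:mc_analysis}) and (\ref{eq:rbm_analysis}) are \emph{deterministic} ODEs in $u$. This reduces the backward equation to a first-order transport PDE in $x\in\mathbb{R}^{dN}$ only (equations (\ref{eq:backward_full})--(\ref{eq:backward_rbm})), with $z_k(t)$ appearing merely as time-dependent coefficients. Consequently the regularity estimate (Lemma~\ref{lem:action_bound}) needs only $\sum_{i,j}\|\partial^2_{x_ix_j}w\|_\infty$---never a $Z$-derivative---and the $Z$-dependence is handled entirely through the moment bounds $\sup_t\mathbb{E}\|Z_t\|^{2q}\le C_q$ of Lemma~\ref{lem:flucs_bound} on the \emph{total} energy $\|Z\|^2=\sum_k|Z_k|^2$ (which uses the summability $\sum_k\sigma_k^2\le C$, not the per-mode bound you wrote). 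This completely circumvents the tangent-flow analysis in the $Z$-directions that you flagged as difficult.

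Your joint-semigroup route also leaves a structural gap you did not address: the MC system lives on $\mathbb{R}^{N(d+K)}$ with independent fluctuation copies $Z^{(i)}$, while the RBM lives on $\mathbb{R}^{Nd+K}$ with a single shared $\tilde Z$, so $P_t$ and $\tilde Q^m$ as you define them do not act on a common state space, and the Duhamel identity you wrote presupposes one. The paper's conditioning device resolves this automatically, since after fixing $z$ both generators $\mathcal{L}_z$ and $\tilde{\mathcal{L}}_z^{\mathcal{I}}$ act on the common $x$-space $\mathbb{R}^{dN}$. Your proposal could likely be repaired by introducing an auxiliary ``synchronized-$Z$'' reference dynamics, but that extra work is precisely what the paper's approach avoids.
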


Above, (\ref{eq:err_bnd}) gives the error estimation of the averaged
RBM prediction of the statistics $\mathbb{E}\varphi\left(\tilde{u}_{n}^{\left(i\right)}\right)$
compared with the full ensemble method $\mathbb{E}\varphi\left(u_{n}^{\left(i\right)}\right)$.
Since the samples in the full ensemble model are independent and identical
under expectation, we have $\mathbb{E}\varphi\left(u_{n}\right)=\mathbb{E}\varphi\left(u_{n}^{\left(i\right)}\right)=\frac{1}{N}\sum_{i=1}^{N}\mathbb{E}\varphi\left(u_{n}^{\left(i\right)}\right)$.
Notice that this formula does not quantify the sufficient size of
the ensemble $N$ to reach accurate statistical prediction of $\mathbb{E}\varphi$.
In particular, the full ensemble model (\ref{eq:mc_analysis}) requires
sufficient sampling of the full phase space of high dimension $d+K\gg1$.
It ends up with an exponential growth in the sample size $N$ as the
dimension $K$ increases, thus suffers the curse-of-dimensionality.
In contrast, the RBM model (\ref{eq:rbm_analysis}) only needs to
sample the low-dimensional resolved subspace $d\ll d+K$. Thus the
required sample size $N$ for $\tilde{u}^{\left(i\right)}$ can be
controlled regardless of the high dimension $K$ of the full fluctuation
state.

\subsection{Proof of the main theorem}

In order to estimate the statistical error under the test function
$\varphi$, we first introduce the function $w_{Z}$ based on the
solution of the full model (\ref{eq:mc_analysis}) given the realization
of the small-scale stochastic process $Z\left(\cdot\right)$
\begin{equation}
w_{Z}\left(x,t\right)\coloneqq w\left(x,t\mid Z\right)=\frac{1}{N}\sum_{i=1}^{N}\varphi^{x}\left(u^{\left(i\right)}\left(t\right)\mid Z^{\left(i\right)}\left(s\right),s<t\right),\label{eq:test_func}
\end{equation}
Above, we use $x=\left\{ x_{i}\right\} _{i=1}^{N}\in\mathbb{R}^{d\times N}$
to denote the initial values of all the samples (with $u^{\left(i\right)}\left(0\right)=x_{i}$
for each initial state $x_{i}=\left(x_{i}^{1},\cdots,x_{i}^{d}\right)\in\mathbb{R}^{d}$).
The function $\varphi^{x}\in C_{b}^{2}\left(\mathbb{R}^{d}\right)$
evaluates the sample solution $u^{\left(i\right)}\left(t\right)$
starting at initial state $u^{\left(i\right)}\left(0\right)=x_{i}$.
The function $w_{Z}$ is defined according to the solutions of the
first equation of (\ref{eq:mc_analysis}) depending on the entire
time sequence of the stochastic process $Z\left(s\right),s<t$, with
$Z=\left\{ Z_{k}\right\} _{k=1}^{K}\in\mathbb{R}^{K}$. Next, we define
the deterministic function $w\left(x,t\right)$ after taking expectation
$\mathbb{E}^{Z}$ (we use superscript for expectation on the stochastic
process $Z\left(\cdot\right)$)
\begin{equation}
w\left(x,t\right)\coloneqq\mathbb{E}^{Z}w_{Z}\left(x,t\right)=\frac{1}{N}\sum_{i=1}^{N}\mathbb{E}_{x}\varphi\left(u^{\left(i\right)}\left(t\right)\right),\label{eq:test_func1}
\end{equation}
with $\mathbb{E}_{x}=\mathbb{E}\left(\cdot\mid u^{\left(i\right)}\left(0\right)=x_{i}\right)$
the expectation given sample initial values $x$. By definition, we
have the initial condition $w_{z}\left(x,0\right)=\frac{1}{N}\sum_{i}\varphi\left(x_{i}\right)=w\left(x,0\right)$.
The functions $w$ and $w_{z}$ satisfy the basic properties from
the semigroup \cite{oksendal2013stochastic,feng2017semi}. That is,
let $\mathcal{L}_{z}$ be the generator for the equation (\ref{eq:mc_analysis}),
$\partial_{t}w_{z}\left(x,t\right)=\mathcal{L}_{z}w_{z}\left(x,t\right)$
as in (\ref{eq:backward_full}). We have from the $L^{\infty}$ contraction
of the semigroup, $\left\Vert e^{t\mathcal{L}_{z}}\varphi\right\Vert _{\infty}\leq\left\Vert \varphi\right\Vert _{\infty}$,
so that,
\[
\left\Vert w\left(\cdot,t\right)\right\Vert _{\infty}=\left\Vert \mathbb{E}^{Z}e^{t\mathcal{L}_{Z}}w\left(\cdot,0\right)\right\Vert _{\infty}\leq\mathbb{E}^{Z}\left\Vert e^{t\mathcal{L}_{Z}}w\left(\cdot,0\right)\right\Vert _{\infty}\leq\mathbb{E}^{Z}\left\Vert w\left(\cdot,0\right)\right\Vert _{\infty}=\left\Vert w\left(\cdot,0\right)\right\Vert _{\infty}\leq C.
\]

Accordingly, we can define the associated function $\tilde{w}$ using
the solution $\tilde{u}^{\left(i\right)}\left(t\right)$ of the RBM
model (\ref{eq:rbm_analysis}) starting from $\tilde{u}_{0}^{\left(i\right)}=x_{i}$
as
\[
\tilde{w}\left(x,t\right)\coloneqq\frac{1}{N}\sum_{i=1}^{N}\mathbb{E}_{x}\left[\varphi\left(\tilde{u}^{\left(i\right)}\left(t\right)\right)\right].
\]
Here, the expectation $\mathbb{E}_{x}$ applies on the additional
randomness subject to random partition $\mathcal{I}^{m}$ of the batches
in $\left\{ Z_{k}\right\} $ at each time update as well as the stochastic
white noise. In order to characterize the statistical evolution of
the samples, we construct the discrete semigroup $\tilde{\mathcal{S}}$
associated with the RBM model in three steps according to Algorithm
\ref{alg:Sample-RBM} in the time interval $\left(t_{m-1},t_{m}\right]$.
First, starting with the initial function $\tilde{w}\left(x,t_{m-1}\right)$,
we fix the solution of the fluctuation modes during the next updating
interval, $Z\left(s\right),t_{m-1}<s\leq t_{m}$. Next, before the
time update at $t=t_{m-1}$, we partition the modes $Z=\left\{ Z_{k}\right\} $
into small batches $\mathcal{I}^{m}=\left\{ \mathcal{I}_{i}^{m}\right\} $.
Finally, we update the solution to the next time step $t=t_{m}$ by
integrating the backward equation (\ref{eq:backward_rbm}) of the
mean state with time step $\tau=t_{m}-t_{m-1}$. The one-step integration
of the RBM generator $\mathcal{\tilde{L}}_{z}^{\mathcal{I}^{m}}$
conditional on the partition $\mathcal{I}^{m}$ and the fluctuation
solution $Z$ during the $m$-th update time interval gives the conditional
updating operator
\[
\tilde{\mathcal{S}}_{Z}^{\mathcal{I}^{m}}\tilde{w}\left(x,t_{m-1}\right)\coloneqq e^{\tau\mathcal{\tilde{L}}_{Z}^{\mathcal{I}^{m}}}\tilde{w}\left(x,t_{m-1}\right).
\]
The one-step semigroup operator follows by taking expectation first
on the random partition $\mathbb{E}^{\mathcal{I}^{m}}$ then on the
stochastic process $\mathbb{E}^{Z}$ in the updating interval $Z\left(s\right),t_{m-1}<s\leq t_{m}$
\begin{equation}
\tilde{\mathcal{S}}\tilde{w}\left(x,t_{m-1}\right)\coloneqq\mathbb{E}^{Z}\mathbb{E}^{\mathcal{I}^{m}}\tilde{\mathcal{S}}_{Z}^{\mathcal{I}^{m}}\tilde{w}\left(x,t_{m-1}\right)=\mathbb{E}^{Z}\tilde{w}_{Z}\left(x,t_{m}\right)=\tilde{w}\left(x,t_{m}\right),\label{eq:semigroup}
\end{equation}
where we define $\tilde{w}_{Z}=\tilde{\mathcal{S}}_{Z}\tilde{w}=\mathbb{E}^{\mathcal{I}^{m}}\tilde{\mathcal{S}}_{Z}^{\mathcal{I}^{m}}\tilde{w}$,
and $\tilde{\mathcal{S}}=\mathbb{E}^{Z}\tilde{\mathcal{S}}_{Z}$.
Therefore, it can be shown that the semigroup is formed by
\[
\tilde{\mathcal{S}}^{\left(m\right)}\varphi\left(x\right)=\frac{1}{N}\sum_{i=1}^{N}\mathbb{E}_{x}\left[\varphi\left(\tilde{u}_{m}^{\left(i\right)}\right)\right]=\tilde{w}\left(x,t_{m}\right)=\tilde{\mathcal{S}}\tilde{w}\left(x,t_{m-1}\right)=\tilde{\mathcal{S}}\circ\tilde{\mathcal{S}}^{\left(m-1\right)}\varphi,
\]
with $\tilde{u}_{m}=\tilde{u}\left(t_{m}\right)$. 

Applying the backward Kolmogorov equation to the full MC model and
the reduced RBM model, we find the governing equations depending on
the realization $z\left(\cdot\right)$ \addtocounter{equation}{0}\begin{subequations}\label{eq:backward}
\begin{align}
\partial_{t}w_{z}=\mathcal{L}_{z}w_{z} & =\sum_{i=1}^{N}\left[V\left(x_{i}\right)+\frac{1}{K}\sum_{k,l}z_{k}z_{l}B_{kl}+F\right]\cdot\partial_{x_{i}}w_{z},\label{eq:backward_full}\\
\partial_{t}\tilde{w}_{z}=\mathcal{\tilde{L}}_{z}^{\mathcal{I}^{m}}\tilde{w}_{z} & =\sum_{i=1}^{N}\left[V\left(x_{i}\right)+\sum_{k,l}c_{kl}I_{i}^{m}\left(k\right)I_{i}^{m}\left(l\right)z_{k}z_{l}B_{kl}+F\right]\cdot\partial_{x_{i}}\tilde{w}_{z}.\label{eq:backward_rbm}
\end{align}
\end{subequations}Above, the RBM generator $\mathcal{\tilde{L}}_{z}^{\mathcal{I}^{m}}$
is also conditional on the random batch partition of modes $\mathcal{I}^{m}=\left\{ \mathcal{I}_{i}^{m}\right\} $
in each time interval $t\in\left(t_{m-1},t_{m}\right]$. To formulate
the equations under comparable terms, we introduce the index function
during the time interval $t_{m-1}<t\leq t_{m}$ as
\begin{equation}
I_{i}^{m}\left(k\right)=\begin{cases}
1, & \mathrm{if}\:k\in\mathcal{I}_{i}^{m},\\
0, & \mathrm{otherwise.}
\end{cases}\label{eq:index}
\end{equation}
The index function $I_{i}^{m}$ is fixed during the time interval
and will be modified at the beginning of each time updating step $t_{m}$
subject to the random partition (we neglect the subscript `$m$' for
time step $t_{m}$ in the rest part of this section for brevity of
notations).

The proof of Theorem \ref{thm:err_estimate} follows the method of
weak convergence in \cite{jin2021convergence}. The main difference
here is that we have to deal with the nonlinear coupling from the
fluctuation modes $Z$. We need the following lemmas to compare the
difference between the statistical functions $w$ and $\tilde{w}$.
The proofs of these lemmas can be found in Appendix \ref{appen1:Proofs-of-lemmas}.

First, Lemma \ref{lem:coeff_exp} indicates the values of the new
coupling coefficients $c_{kl}$ in (\ref{eq:weight_rbm}) of the RBM
model according to expectations on the random batch partition. 
\begin{lem}
\label{lem:coeff_exp}Let $I_{i}\left(k\right)$ be the index function
(\ref{eq:index}) indicating that the mode $k$ belongs to the batch
$\mathcal{I}_{i}$ with $k=1,\cdots,Np=K$ ($K$ the total number
of modes partitioned into $N$ random batches). Then we have the expectations
about the partition $\mathcal{I}$ for any $k$
\[
\mathbb{E}I_{i}^{2}\left(k\right)=\mathbb{E}I_{i}\left(k\right)=\frac{p}{K};
\]
and for any $k\neq l$
\[
\mathbb{E}I_{i}\left(k\right)I_{i}\left(l\right)=\frac{p}{K}\frac{p-1}{K-1}.
\]
\end{lem}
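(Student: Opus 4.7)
The plan is to exploit two simple facts about the random partition model: the indicator $I_i(k)$ is $\{0,1\}$-valued, and by symmetry the uniform random partition treats every mode identically.

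First I would observe that since $I_i(k) \in \{0,1\}$, we have $I_i^2(k) = I_i(k)$, which immediately gives $\mathbb{E}[I_i^2(k)] = \mathbb{E}[I_i(k)] = \mathbb{P}(k \in \mathcal{I}_i)$. To compute this probability, I would invoke the symmetry of the uniform random partition of $\{1,\dots,K\}$ into $N$ batches of equal size $p$: each of the $K$ modes is equally likely to be placed in batch $i$. Since batch $i$ contains exactly $p$ modes out of $K$ total, $\mathbb{P}(k \in \mathcal{I}_i) = p/K$, which establishes the first identity.

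For the second identity with $k \neq l$, I would compute the joint probability $\mathbb{P}(k \in \mathcal{I}_i,\; l \in \mathcal{I}_i)$ by conditioning:
\[
\mathbb{E}[I_i(k)I_i(l)] = \mathbb{P}(k \in \mathcal{I}_i)\cdot\mathbb{P}(l \in \mathcal{I}_i \mid k \in \mathcal{I}_i).
\]
Given that $k$ occupies one of the $p$ slots of batch $i$, the remaining $K-1$ modes are distributed uniformly over the remaining slots, of which $p-1$ lie inside $\mathcal{I}_i$. Hence $\mathbb{P}(l \in \mathcal{I}_i \mid k \in \mathcal{I}_i) = (p-1)/(K-1)$, and multiplying by $p/K$ yields the claim. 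Alternatively, a direct counting argument works: the number of batch-$i$ assignments containing both $k$ and $l$ divided by the total number of choices for batch $i$ equals $\binom{K-2}{p-2}/\binom{K}{p} = \frac{p(p-1)}{K(K-1)}$.

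There is no real obstacle here; the only care needed is to specify precisely the probability space (uniform random partition into equal-sized batches of size $p = K/N$) and to invoke exchangeability of the modes so that marginal and pairwise inclusion probabilities follow from a one-line counting argument. The structure of the result is exactly that of sampling $p$ items without replacement from a population of size $K$, which is why the factor $(p-1)/(K-1)$ naturally appears in place of the with-replacement value $p/K$. This also motivates the rescaling $c_{kl}$ in \eqref{eq:weight_rbm}: multiplying the $k=l$ coupling by $1/p$ and the $k\neq l$ coupling by $\frac{1}{p}\cdot\frac{K-1}{p-1}$ makes the expected batch feedback $\sum_{k,l\in\mathcal{I}_i}c_{kl}Z_kZ_l B_{kl}$ equal to $\frac{1}{K}\sum_{k,l}Z_kZ_l B_{kl}$ as in \eqref{eq:mc_analysis}, which is precisely what will be needed in the consistency step of the proof of Theorem \ref{thm:err_estimate}.
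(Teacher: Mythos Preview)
Your argument is correct. It differs from the paper's proof mainly in packaging: the paper enumerates the full set of ordered partitions of $K=np$ modes into $n$ batches via the multinomial count $n!M(n)=(np)!/(p!)^{n}$, then counts how many of those partitions place $k$ (resp.\ both $k$ and $l$) in batch $i$ and takes the ratio. You instead reduce everything to the marginal law of the single batch $\mathcal{I}_i$, using exchangeability to get $\mathbb{P}(k\in\mathcal{I}_i)=p/K$ and conditioning (equivalently, sampling without replacement) for the pairwise case. Your route is shorter and makes the connection to hypergeometric sampling explicit, which also cleanly explains the form of the rescaling weights $c_{kl}$; the paper's route is more explicit about the underlying uniform measure on partitions, which can be reassuring if one wants to verify that ``uniform random partition'' has been set up consistently. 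Either argument is entirely adequate here.
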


Next, the following lemma describes the estimates on the moments of
entire fluctuation modes. The supremum bound is guaranteed by the
crucial stability assumption (\ref{eq:assump3}) with uniform negative
damping in the dynamical equations of all the modes $Z_{k}$.
\begin{lem}
\label{lem:flucs_bound}Under the stable dynamics (\ref{eq:assump3})
, the fluctuation modes $Z_{k}$ and $\tilde{Z}_{k}$ in (\ref{eq:mc_analysis})
and (\ref{eq:rbm_analysis}) satisfy
\[
\sup_{t\leq T}\mathbb{E}\left\Vert Z_{t}\right\Vert ^{2q}<C_{q},\quad\sup_{t\leq T}\mathbb{E}\left\Vert \tilde{Z}_{t}\right\Vert ^{2q}<C_{q},
\]
for any integer $q\geq1$ and $\left\Vert Z_{t}\right\Vert ^{2}=\sum_{k=1}^{K}\left|Z_{k}\left(t\right)\right|^{2}$.
The constant $C_{q}$ is independent of $T$.
\end{lem}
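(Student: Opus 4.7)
\subsection*{Proof proposal for Lemma \ref{lem:flucs_bound}}

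The plan is to derive a linear differential inequality for $M_q(t) \coloneqq \mathbb{E}\|Z_t\|^{2q}$ by applying the infinitesimal generator of the fluctuation SDE to the test function $F(Z) = \|Z\|^{2q}$, and then to close it via Young's inequality so that Gr\"{o}nwall yields a bound that is uniform in $t$. Since the MC and RBM fluctuation dynamics (\ref{eq:mc_analysis})--(\ref{eq:rbm_analysis}) have the same structure $\mathrm{d}Z_k = (\gamma_k(u)-d_k) Z_k\,\mathrm{d}t + \sigma_k\,\mathrm{d}W_k$ with the stability bound (\ref{eq:assump3}) holding uniformly in $u$, a single computation will handle both $Z_t$ and $\tilde Z_t$.

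First I would compute, via It\^o's formula,
\begin{equation*}
\mathcal{L} F(Z) = 2q\|Z\|^{2q-2}\sum_{k=1}^K(\gamma_k(u)-d_k)Z_k^2 + q\|Z\|^{2q-2}\sum_{k=1}^K\sigma_k^2 + 2q(q-1)\|Z\|^{2q-4}\sum_{k=1}^K\sigma_k^2 Z_k^2.
\end{equation*}
Using $d_k-\gamma_k(u)\geq r$ (so the first term is $\leq -2qr\|Z\|^{2q}$) and $\sum_k\sigma_k^2\leq C$ (which also gives $\sum_k\sigma_k^2 Z_k^2\leq C\|Z\|^2$), one obtains
\begin{equation*}
\mathcal{L}F(Z) \leq -2qr\,\|Z\|^{2q} + q(2q-1)C\,\|Z\|^{2q-2}.
\end{equation*}
Then I would apply Young's inequality $a^{q-1}\leq \varepsilon a^q + C_\varepsilon$ with $a=\|Z\|^2$ and $\varepsilon = r/[(2q-1)C]$ to absorb the sub-leading term, obtaining $\mathcal{L}F\leq -qr\,F + C'_q$. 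Taking expectation and using the standard fact that $\frac{d}{dt}\mathbb{E}F(Z_t) = \mathbb{E}[\mathcal{L}F(Z_t)]$ (the stochastic integral is a martingale, whose integrability can be justified by a routine localization/truncation argument using the continuity of $t\mapsto Z_t$), I get the ODE inequality
\begin{equation*}
\frac{\mathrm{d}}{\mathrm{d}t} M_q(t) \leq -qr\,M_q(t) + C'_q,
\end{equation*}
from which Gr\"{o}nwall (or direct comparison) gives $M_q(t)\leq e^{-qrt}M_q(0) + C'_q/(qr)$, a bound independent of $T$.

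For the RBM solution $\tilde Z_t$, the same argument applies on each time interval $(t_{m-1},t_m]$: the instantaneous SDE for $\tilde Z_k$ has the identical form with $u^{(i)}$ replaced by $\tilde u^{(i)}$, and Assumption \ref{assu:assump_coeffs} is a \emph{uniform in $u$} hypothesis, so the coefficient $-qr$ in the differential inequality is unchanged by the random partition and the random choice of coupled mean state. Continuity of $\tilde Z_t$ across resampling times $t_m$ lets one chain the interval estimates into a global one, yielding $\sup_{t\leq T}\mathbb{E}\|\tilde Z_t\|^{2q}\leq C_q$.

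The only place requiring care will be the Young step: one must choose $\varepsilon$ so that the remaining dissipative coefficient stays strictly positive (here $-qr$ survives), and in the RBM case, one must verify that the randomness in the batch index $\mathcal{I}^m$ does not enter the fluctuation equations themselves (which is apparent from (\ref{eq:rbm_analysis}): the index only reweights the feedback into $\bar u$, not the drift of $\tilde Z_k$). No new ideas beyond standard dissipative SDE moment estimates are needed.
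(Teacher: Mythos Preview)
Your proposal is correct and follows essentially the same strategy as the paper: apply It\^o's formula to $\|Z\|^{2q}$, use the uniform dissipativity $d_k-\gamma_k(u)\ge r$ to extract $-2qr\|Z\|^{2q}$, and control the lower-order $\|Z\|^{2q-2}$ term to close a Gr\"onwall inequality. The only technical difference is that the paper closes by \emph{induction on $q$} (the $q=1$ case gives a constant right-hand side, and then $\mathbb{E}\|Z\|^{2(q-1)}\le C_{q-1}$ feeds the next step), whereas you close in one shot via Young's inequality; both are standard and equally valid.

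One small inaccuracy: you write that the batch index ``does not enter the fluctuation equations themselves,'' but in (\ref{eq:rbm_analysis}) the partition $\mathcal{I}^m$ does determine \emph{which} mean sample $\tilde u^{(i)}$ drives $\tilde Z_k$ through $\gamma_k(\tilde u^{(i)})$. This is harmless for exactly the reason you already give---the bound $d_k-\gamma_k(u)\ge r$ is uniform in $u$---so the estimate is unaffected, but the parenthetical justification should be amended.
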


The last lemma shows the regularity in the function $w\left(x,t\right)$
that is uniformly bounded up to second-order differentiations.
\begin{lem}
\label{lem:action_bound}For $t\leq T$, we have the uniform bound
for the characteristic function (\ref{eq:test_func1}) at least up
to second-order derivatives
\[
\left\Vert \nabla_{x}^{2}w\left(\cdot,t\right)\right\Vert _{\infty}=\sum_{i,j=1}^{N}\left\Vert \partial_{x_{i}x_{j}}^{2}w\left(\cdot,t\right)\right\Vert _{\infty}<C_{\varphi}\left(T\right),
\]
with the constant $C_{\varphi}\left(T\right)>0$ independent of $N$.
\end{lem}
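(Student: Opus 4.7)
The plan is to exploit the independence of the samples in (\ref{eq:mc_analysis}) to reduce the $N$-body Hessian estimate to a single-sample estimate, and then control the single-sample estimate via the variational (derivative) flow of the SDE in the initial datum. Since in the direct MC model each trajectory $u^{(i)}(t)$ depends only on its own initial state $x_i$ and its own Brownian motion $W^{(i)}$, which are mutually independent across $i$, I can write $w(x,t) = \frac{1}{N}\sum_{i=1}^{N} g(x_i, t)$ where $g(y,t) := \mathbb{E}[\varphi(u(t;y))]$ comes from a single realization of (\ref{eq:mc_analysis}) started at $u(0)=y$. This separated form immediately gives $\partial^2_{x_i x_j} w \equiv 0$ for $i \neq j$ and $\partial^2_{x_i x_i} w = \frac{1}{N}\nabla^2 g(x_i,t)$, so that $\sum_{i,j}\|\partial^2_{x_ix_j}w(\cdot,t)\|_\infty = \|\nabla^2 g(\cdot,t)\|_\infty$, already independent of $N$. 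The remaining task is to show $\|\nabla^2 g(\cdot,t)\|_\infty \leq C_\varphi(T)$.

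For this single-sample bound I would set up the first- and second-variation equations of (\ref{eq:mc_analysis}) with respect to the initial datum $y$. Denoting $Y(t) := \partial_y u(t) \in \mathbb{R}^{d\times d}$ and $\Psi_k(t) := \partial_y Z_k(t) \in \mathbb{R}^{d}$ (with $Y(0) = I_d$, $\Psi_k(0) = 0$), formal differentiation yields the linear system
\begin{align*}
\tfrac{d Y}{d t} &= \nabla V(u)\, Y + \tfrac{2}{K}\sum_{k,l} Z_l\, B_{kl}\, \Psi_k^{\top}, \\
\tfrac{d \Psi_k}{d t} &= \bigl(\gamma_k(u) - d_k\bigr)\Psi_k + Z_k\, \bigl(\nabla\gamma_k(u)\, Y\bigr)^{\top},
\end{align*}
in which the Brownian forcing drops out because $W_k$ is independent of $y$. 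The stability condition (\ref{eq:assump3}) supplies a uniform dissipation rate $-r$ for each $\Psi_k$-equation, and pairing this with the $L^{2q}$ moment bounds on $\|Z_t\|$ from Lemma \ref{lem:flucs_bound} and the boundedness of $\nabla V$ and $B_{kl}$ from (\ref{eq:assump1})--(\ref{eq:assump2}), Gronwall's inequality applied to $\mathbb{E}\|Y(t)\|^2 + \sum_k \mathbb{E}\|\Psi_k(t)\|^2$ yields a bound $C(T)$ that is uniform in $K$ and in $y$. The second-variation system, whose forcing is a trilinear combination of $Y$, $\Psi_k$, $Z_k$ and their pairwise products, is treated in exactly the same way using the higher ($q\geq 2$) moment bounds of Lemma \ref{lem:flucs_bound} to close the estimate.

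With moment bounds on the first and second sensitivities in hand, the chain rule gives the identity
\[
\nabla^2_y g(y,t) = \mathbb{E}\bigl[\, Y(t)^{\top}\,\nabla^2\varphi(u(t))\,Y(t) + \nabla\varphi(u(t))\cdot\nabla^2_y u(t)\,\bigr],
\]
and since $\varphi \in C_b^2$, the right-hand side is bounded by $\|\varphi\|_{C_b^2}\bigl(\mathbb{E}\|Y(t)\|^2 + \mathbb{E}\|\nabla^2_y u(t)\|\bigr) \leq C_\varphi(T)$ uniformly in $y$, which delivers the desired bound on $\|\nabla^2 g(\cdot,t)\|_\infty$ and hence the lemma.

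The main obstacle is the $K$-uniform control of the variational flow. The forcing of the $Y$-equation is a double sum of the form $\frac{1}{K}\sum_{k,l} Z_l B_{kl}\Psi_k^{\top}$, and a naive $\ell^1$ bound on this sum grows like $K$. The correct estimate pairs Cauchy--Schwarz in $(k,l)$ with the uniform-in-$K$ energies $\sum_k \mathbb{E}|Z_k|^2 \leq C$ (from Lemma \ref{lem:flucs_bound}) and $\sum_k \mathbb{E}\|\Psi_k\|^2$, the latter kept bounded precisely by the stability hypothesis (\ref{eq:assump3}) together with $\sum_k \sigma_k^2 \leq C$ and the bounded-$B_{kl}$ assumption. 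Closing the coupled Gronwall inequality for the $(Y,\Psi)$ moments in a way that absorbs the $1/K$ prefactor against the $\sqrt{K}\sqrt{K}$ growth from the double summation, and replicating this at the second-order level where an additional $Z_k$ or $Y$ factor appears, is the technical heart of the argument.
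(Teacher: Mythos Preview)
Your initial reduction---exploiting the i.i.d.\ structure of the MC samples to write $w(x,t)=\frac{1}{N}\sum_i g(x_i,t)$ and collapse the $N$-body Hessian to a single-sample bound---is correct and is essentially what the paper obtains implicitly (via the $\delta_{ij}$ in its second-derivative equation and the $1/N$ from the initial data). Where the two arguments diverge is in bounding $\|\nabla^2 g\|_\infty$.

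The paper conditions on the realization $z(\cdot)$ of the fluctuation process and works with the backward Kolmogorov equation for $w_z$. In that equation the only $x$-dependent coefficient is $V(x_i)$; the $z$-terms are constants in $x$. Differentiating in $x_i$ therefore gives $\partial_t\partial_{x_i}w_z=\mathcal{L}_z\partial_{x_i}w_z+\nabla V(x_i)\cdot\partial_{x_i}w_z$, and the $L^\infty$-contraction of the transport semigroup $e^{t\mathcal{L}_z}$ together with Gr\"onwall (using only $\|\nabla V\|_\infty,\|\nabla^2V\|_\infty\le C$ from (\ref{eq:assump2})) closes the estimate; one takes $\mathbb{E}^Z$ only at the very end. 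No information about $\gamma_k$ beyond the stability condition (\ref{eq:assump3}) is ever used.

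Your forward-sensitivity route, by contrast, must track $\Psi_k=\partial_y Z_k$, whose evolution is forced by $Z_k\,\nabla\gamma_k(u)\,Y$. Closing the coupled $(Y,\Psi)$ Gronwall uniformly in $K$ therefore tacitly requires a uniform bound on $\nabla\gamma_k$ (and $\nabla^2\gamma_k$ at second order), which is \emph{not} part of Assumption~\ref{assu:assump_coeffs}; in the topographic model (\ref{eq:topo_model}) one has $\gamma_k(U)=i(k^{-1}\beta-kU)$ and $|\partial_U\gamma_k|=k$, unbounded in $k$. So as written your argument needs an extra hypothesis the paper's backward-equation approach sidesteps. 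Under that additional hypothesis your plan is sound, and what it buys is a direct probabilistic proof avoiding the semigroup machinery, at the cost of a more delicate moment closure (the random factor $\|Z\|$ multiplying the Gronwall forcing requires the higher moments $\mathbb{E}\|Z\|^{2q}$, $q\ge2$, from Lemma~\ref{lem:flucs_bound}).
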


With the above lemmas, we can give the proof of our main theorem.
First, we estimate the one-step error between $w_{z}\left(x,t_{m+1}\right)$
and $\tilde{\mathcal{S}}_{z}w\left(x,t\right)$, then the final estimate
follows by taking the expectation about $Z$. 
\begin{proof}
[Proof of the Theorem]In the RBM model, the batches of modes $\left\{ \tilde{Z}_{k}\right\} $
are regrouped before each time step update at $t_{m}=m\tau$. Thus
we first focus on the one step update during the time internal $t_{m}<t\leq t_{m+1}$
for all $m\leq n$ in the backward equations for the full model (\ref{eq:backward_full})
and the RBM model (\ref{eq:backward_rbm}). 

For the one-step time update of the RBM model and given $z\left(s\right),t_{m}<s\leq t_{m+1}$,
the discrete semigroup operator $\tilde{\mathcal{S}}_{z}^{\mathcal{I}}$
for a fixed batch partition $\mathcal{I}$ applied on the function
$w\left(x,t_{m}\right)$ according to (\ref{eq:backward_rbm}) has
the expanded expression
\[
\tilde{\mathcal{S}}_{z}^{\mathcal{I}}w\left(x,t_{m}\right)=e^{\tau\tilde{\mathcal{L}}_{z}^{\mathcal{I}}}w\left(x,t_{m}\right)=w\left(x,t_{m}\right)+\tau\tilde{\mathcal{L}}_{z}^{\mathcal{I}}w\left(x,t_{m}\right)+\int_{0}^{\tau}\left(\tau-s\right)\left(\tilde{\mathcal{L}}_{z}^{\mathcal{I}}\right)^{2}e^{s\tilde{\mathcal{L}}_{z}^{\mathcal{I}}}w\left(x,t_{m}\right)ds.
\]
Correspondingly for the full model (\ref{eq:backward_full}), we have
the continuous generator applying on the same function at $t=t_{m+1}$
\[
w_{z}\left(x,t_{m+1}\right)=e^{\tau\mathcal{L}_{z}}w\left(x,t_{m}\right)=w\left(x,t_{m}\right)+\tau\mathcal{L}_{z}w\left(x,t_{m}\right)+\int_{0}^{\tau}\left(\tau-s\right)\mathcal{L}_{z}^{2}e^{s\mathcal{L}_{z}}w\left(x,t_{m}\right)ds.
\]
Notice that above we consider the one-step update from $w\left(x,t_{m}\right)$
conditional on the realization of fluctuation modes $z$ during the
updating interval $\left(t_{m},t_{m+1}\right]$.

Next, we take the expectation on the random batch partition $\mathcal{I}$
at step $t_{m}$ on the RBM generator $\mathcal{\tilde{L}}_{z}^{\mathcal{I}}$.
Then the RBM generator becomes consistent with the full model generator
under the proper choice of the coefficients $c_{kl}$
\[
\mathbb{E}^{\mathcal{I}}\tilde{\mathcal{L}}_{z}^{\mathcal{I}}=\mathcal{L}_{z}.
\]
This is guaranteed by Lemma \ref{lem:coeff_exp} and the coefficients
$c_{kl}$ in (\ref{eq:weight_rbm}) appear naturally from the lemma.
With this, we have from the definition in (\ref{eq:semigroup})

\begin{align*}
\tilde{\mathcal{S}}w\left(x,t_{m}\right)-w\left(x,t_{m+1}\right) & =\mathbb{E}^{Z}\mathbb{E}^{\mathcal{I}}\tilde{\mathcal{S}}_{Z}^{\mathcal{I}}w\left(x,t_{m}\right)-\mathbb{E}^{Z}w_{Z}\left(x,t_{m+1}\right).\\
 & =\int_{0}^{\tau}\left(\tau-s\right)\left[\mathbb{E}^{Z,\mathcal{I}}\left(\tilde{\mathcal{L}}_{Z}^{\mathcal{I}}\right)^{2}e^{s\tilde{\mathcal{L}}_{Z}^{\mathcal{I}}}-\mathbb{E}^{Z}\mathcal{L}_{Z}^{2}e^{s\mathcal{L}_{Z}}\right]w\left(x,t_{m}\right)ds.
\end{align*}
Then, we show that the residual terms in the above integrants are
uniformly bounded with the constant $C$ independent of $N,K$
\[
\left\Vert \mathbb{E}^{Z}\mathcal{L}_{Z}^{2}e^{s\mathcal{L}_{Z}}w\left(\cdot,t\right)\right\Vert _{\infty}<C,\quad\left\Vert \mathbb{E}^{Z}\left(\tilde{\mathcal{L}}_{Z}^{\mathcal{I}}\right)^{2}e^{s\tilde{\mathcal{L}}_{Z}^{\mathcal{I}}}w\left(\cdot,t\right)\right\Vert _{\infty}<C.
\]
In fact, using the fact that $\mathcal{L}_{z}^{2}$ and $e^{s\mathcal{L}_{z}}$
are commutative and $e^{s\mathcal{L}_{z}}$ is a contraction under
$L^{\infty}$, we have the estimation 
\[
\begin{aligned}\left\Vert \mathbb{E}^{Z}\mathcal{L}_{Z}^{2}e^{s\mathcal{L}_{Z}}w\left(\cdot,t\right)\right\Vert _{\infty} & \leq\mathbb{E}^{Z}\left\Vert e^{s\mathcal{L}_{Z}}\mathcal{L}_{Z}^{2}w\left(\cdot,t\right)\right\Vert _{\infty}\leq\mathbb{E}^{Z}\left\Vert \mathcal{L}_{Z}^{2}w\left(\cdot,t\right)\right\Vert _{\infty}\\
 & \leq\left(C_{0}+\frac{C_{1}}{K}\sum_{k,l}\mathbb{E}\left|Z_{k}Z_{l}\right|+\frac{C_{2}}{K^{2}}\sum_{k,l,m,n}\mathbb{E}\left|Z_{k}Z_{l}Z_{m}Z_{n}\right|\right)\left\Vert \nabla_{x}^{2}w\left(\cdot,t\right)\right\Vert _{\infty}.
\end{aligned}
\]
In the last inequality above, we expand the operator $\mathcal{L}_{Z}^{2}$
and $C_{0},C_{1},C_{2}$ are three positive constants independent
of $N,K$, and $\left\Vert \nabla_{x}^{2}w\right\Vert _{\infty}=\sum_{i,j}\left\Vert \partial_{x_{i}x_{j}}^{2}w\right\Vert _{\infty}$.
The uniform bounds for the total moments $\frac{1}{K}\sum_{k,l}\mathbb{E}\left|Z_{k}Z_{l}\right|\leq\mathbb{E}\left\Vert Z\right\Vert ^{2}$
and $\frac{1}{K^{2}}\sum_{k,l,m,n}\mathbb{E}\left|Z_{k}Z_{l}Z_{m}Z_{n}\right|\leq\mathbb{E}\left\Vert Z\right\Vert ^{4}$
are guaranteed by Lemma \ref{lem:flucs_bound}. And we have the uniform
bound for $\left\Vert \nabla_{x}^{2}w\right\Vert _{\infty}$ by Lemma
\ref{lem:action_bound}. The same result can be achieved for $\tilde{\mathcal{L}}_{Z}^{\mathcal{I}}$
with the similar argument. 

Therefore, during the time updating from $t_{m-1}$ to $t_{m}$, we
have the one-step error between the RBM solution $\tilde{\mathcal{S}}w\left(x,t_{m}\right)$
and the full model $w\left(x,t_{m+1}\right)$ from the time integration
of the residual term inside the time interval
\[
\left\Vert \tilde{\mathcal{S}}w\left(\cdot,t_{m}\right)-w\left(\cdot,t_{m+1}\right)\right\Vert _{\infty}=\left\Vert \mathbb{E}^{Z}\left[\tilde{\mathcal{S}}_{Z}w\left(\cdot,t_{m}\right)-w_{Z}\left(\cdot,t_{m+1}\right)\right]\right\Vert _{\infty}\leq C\tau^{2}.
\]
Finally, applying $\tilde{\mathcal{S}}^{\left(n\right)}$ on the initial
function $\varphi$ and using $\tilde{\mathcal{S}}\tilde{w}\left(x,t_{m}\right)=\tilde{w}\left(x,t_{m+1}\right)$
by recurrently applying the semigroup and using the $L^{\infty}$
contraction property for $\tilde{\mathcal{S}}$, the total error at
$t=t_{n}=n\tau$ can be computed as
\begin{align*}
\left\Vert \tilde{\mathcal{S}}^{\left(n\right)}\varphi-w\left(\cdot,t_{n}\right)\right\Vert _{\infty} & \leq\left\Vert \tilde{\mathcal{S}}\left[\tilde{\mathcal{S}}^{\left(n-1\right)}\varphi-w\left(\cdot,t_{n-1}\right)\right]\right\Vert _{\infty}+\left\Vert \tilde{\mathcal{S}}w\left(\cdot,t_{n-1}\right)-w\left(\cdot,t_{n}\right)\right\Vert _{\infty}\\
 & \leq\left\Vert \tilde{\mathcal{S}}^{\left(n-1\right)}\varphi-w\left(\cdot,t_{n-1}\right)\right\Vert _{\infty}+\left\Vert \tilde{\mathcal{S}}w\left(\cdot,t_{n-1}\right)-w\left(\cdot,t_{n}\right)\right\Vert _{\infty}\\
 & \leq\sum_{m=0}^{n-1}\left\Vert \tilde{\mathcal{S}}w\left(\cdot,t_{m}\right)-w\left(\cdot,t_{m+1}\right)\right\Vert _{\infty}\leq C\left(t_{n}\right)\tau.
\end{align*}
This gives the final error estimate by maximizing among the initial
samples $x_{i}$.
\end{proof}

\section{Numerical tests of the random batch algorithm on turbulent models\label{sec:Numerical-confirmation}}

Now, we evaluate the numerical performance of the general RBM model
in Algorithm \ref{alg:Sample-RBM} using turbulent models containing
a wide spectrum of fluctuation modes. In particular, we test the algorithm
on two prototype benchmark models that are shown to generate various
representative phenomena in multiscale turbulence, that is, the conceptual
turbulent model and the topographic barotropic model. The RBM model
displays uniformly high skill in capturing the key statistical behaviors
in the dominant large-scale states displaying highly non-Gaussian
PDFs and intermittent extreme events with much lower computational
cost.

\subsection{Random batch algorithm for the conceptual turbulent model}

One particular concrete example accepting the general model framework
(\ref{eq:model_decoupled}) is the \emph{conceptual dynamical model
for turbulence} developed in \cite{majda2014conceptual}
\begin{equation}
\begin{aligned}\frac{\mathrm{d}\bar{u}}{\mathrm{d}t}= & -\bar{d}\bar{u}+\frac{\gamma}{K}\sum_{k=1}^{K}v_{k}^{2}-\bar{\alpha}\bar{u}^{3}+\bar{F},\\
\frac{\mathrm{d}v_{k}}{\mathrm{d}t}= & -\gamma\bar{u}v_{k}-d_{k}v_{k}+\sigma_{k}\dot{W}_{k},\quad1\leq k\leq K.
\end{aligned}
\label{eq:conceptual_model}
\end{equation}
Above, the state variables $\left(\bar{u},v_{k}\right)\in\mathbb{R}^{1+K}$
constitute a $\left(K+1\right)$-dimensional system. The scalar large-scale
mean state $\bar{u}$ is coupled with each small-scale mode $v_{k}$
through the nonlinear interaction coefficient $\gamma>0$, while the
large number of small-scale fluctuation modes $v_{k}$ impact the
large-scale mean state $\bar{u}$ together through a quadratic coupling
term. It is easy to check that the nonlinear coupling term conserves
the total energy $E=\frac{1}{2}\left(\bar{u}^{2}+\frac{1}{K}\sum_{k}v_{k}^{2}\right)$
so that the structural property in (\ref{eq:abs_formu}) is satisfied.
The model (\ref{eq:conceptual_model}) gives a typical characterization
for the anisotropic turbulence in which fluctuating energy flows intermittently
from a wide range of small scales to affect the largest scale mean
flow. Besides, unstable dynamics are also induced in the leading fluctuation
modes $v_{k}$ through coupling with the mean state when $-d_{k}-\gamma\bar{u}>0$,
where strong intermittency and extreme events are triggered with non-Gaussian
statistics through the chaotic fluctuations. This characterizes another
key observation in turbulence which is captured in this conceptual
model.

The model parameters $\left(\bar{d},\bar{\alpha},\bar{F},\gamma,d_{k},\sigma_{k}\right)$
for a strongly unstable regime are listed in Table \ref{tab:Model-parameters-conceptual}.
Strong instability in the large-scale dynamics for $\bar{u}$ is imposed
through the linear anti-damping term $\bar{d}<0$, which needs to
be balanced by the nonlinear feedbacks from both small and large scales.
A wide spectrum of modes $K=100$ is included for multiscale fluctuations.
$d_{k}$ and $\sigma_{k}$ are used to describe the turbulent dissipation
and white noise forcing in these fluctuation modes respectively. For
the convenience of the numerical test, we assign the Kolmogorov spectrum
$E_{k}=\frac{\sigma_{k}^{2}}{2d_{k}}=E_{0}k^{-5/3}$ for the leading
modes $k\leq K_{1}=4$, while all the other smaller-scale modes have
equipartition of energy $E_{k}=E_{0}K_{1}^{-5/3}$. Under this model
setup, it can be shown that an ergodic invariant measure \cite{mattingly2002ergodicity}
will be reached while instability on both small and large scales will
create intermittent behavior and non-Gaussian distributions during
the model evolution.

For direct ensemble simulation in a large phase space with dimension
$1+K=101$, we take a large ensemble size $N=1\times10^{4}$ for accurate
MC simulation to get the reference true distribution. Next, to apply
the RBM model to recover the probabilistic solutions in leading states,
only the mean state $\bar{u}$ together with the first 4 leading fluctuation
modes $v_{k},k\leq K_{1}=4$ is sampled. This is considering their
unstable dynamics to satisfy the assumption in (\ref{eq:assump3})
so that only stable small-scale fluctuation modes are partitioned
in the random batches of a small size $p=5$. Thus a much smaller
sample size $N_{1}=100$ is sufficient to model the $1+K_{1}=5$ dimensional
subspace. The algorithm can be developed directly according to the
steps in Algorithm \ref{alg:Sample-RBM} with a simple generalization
of also resolving the leading fluctuation modes. The joint PDFs of
the resolved states can be approximated by sample histograms as the
empirical representations in (\ref{eq:pdf_empirical}) and (\ref{eq:pdf_rbm}).
We summarize the detailed numerical RBM equations for the conceptual
model (\ref{eq:conceptual_model}) in Appendix \ref{subsec:RBM-conceptual}.

\begin{table}
\begin{centering}
\begin{tabular}{cccccccccccccc}
\toprule 
$K$ & $\bar{d}$ & $\bar{\alpha}$ & $\bar{F}$ & $\gamma$ &  & $d_{k}$ & $\sigma_{k}$ & $E_{0}$ &  & $K_{1}$ & $p$ & $N$ (full MC) & $N_{1}$ (RBM)\tabularnewline
\midrule 
100 & -0.1 & 0.05 & -0.055 & 1.5 &  & $1+0.02k^{2}$ & $\sqrt{2E_{k}d_{k}}$ & 0.004 &  & 4 & 5 & 10000 & 100\tabularnewline
\bottomrule
\end{tabular}
\par\end{centering}
\caption{Parameter values for the conceptual turbulent model.\label{tab:Model-parameters-conceptual}}
\end{table}
To illustrate the basic statistical features, Figure \ref{fig:Equilibrium-energy-spectrum}
plots the equilibrium energy spectrum $\mathbb{E}^{\mathrm{eq}}\left|v_{k}\right|^{2}$
and the decorrelation time $\int_{0}^{\infty}\mathbb{E}^{\mathrm{eq}}\left|v_{k}\left(t\right)v_{k}\left(0\right)\right|\mathrm{d}t$
of the fluctuation modes (with $\mathbb{E}^{\mathrm{eq}}$ denoting
the average about the equilibrium measure). This displays a typical
example of the common features in turbulent flows: the first few leading
modes accumulate most of the energy and a relatively long decorrelation
time, while there exists a long extended spectrum of small-scale fast-mixing
fluctuating modes containing small energy in each mode but having
a non-negligible combined contribution to the large-scale mean flow.
This is shown more clearly in the right column of Figure \ref{fig:Equilibrium-energy-spectrum}
for the time evolution of energy $\sum_{k}\left|v_{k}\right|^{2}$
in the leading modes $k\leq4$ and all the rest small-scale modes
$k>4$. The leading modes show intermittent bursts of large energy
due to the destabilizing coupling with the mean flow, while the large
number of small-scale fluctuating modes account for a major amount
of energy during the quiescent regime for most of the time during
the evolution. This confirms that the contributions from the many
small-scale modes play an important role of driving the large scales
to the final equilibrium and cannot be simply ignored in simulations.

\begin{figure}
\subfloat{\includegraphics[scale=0.27]{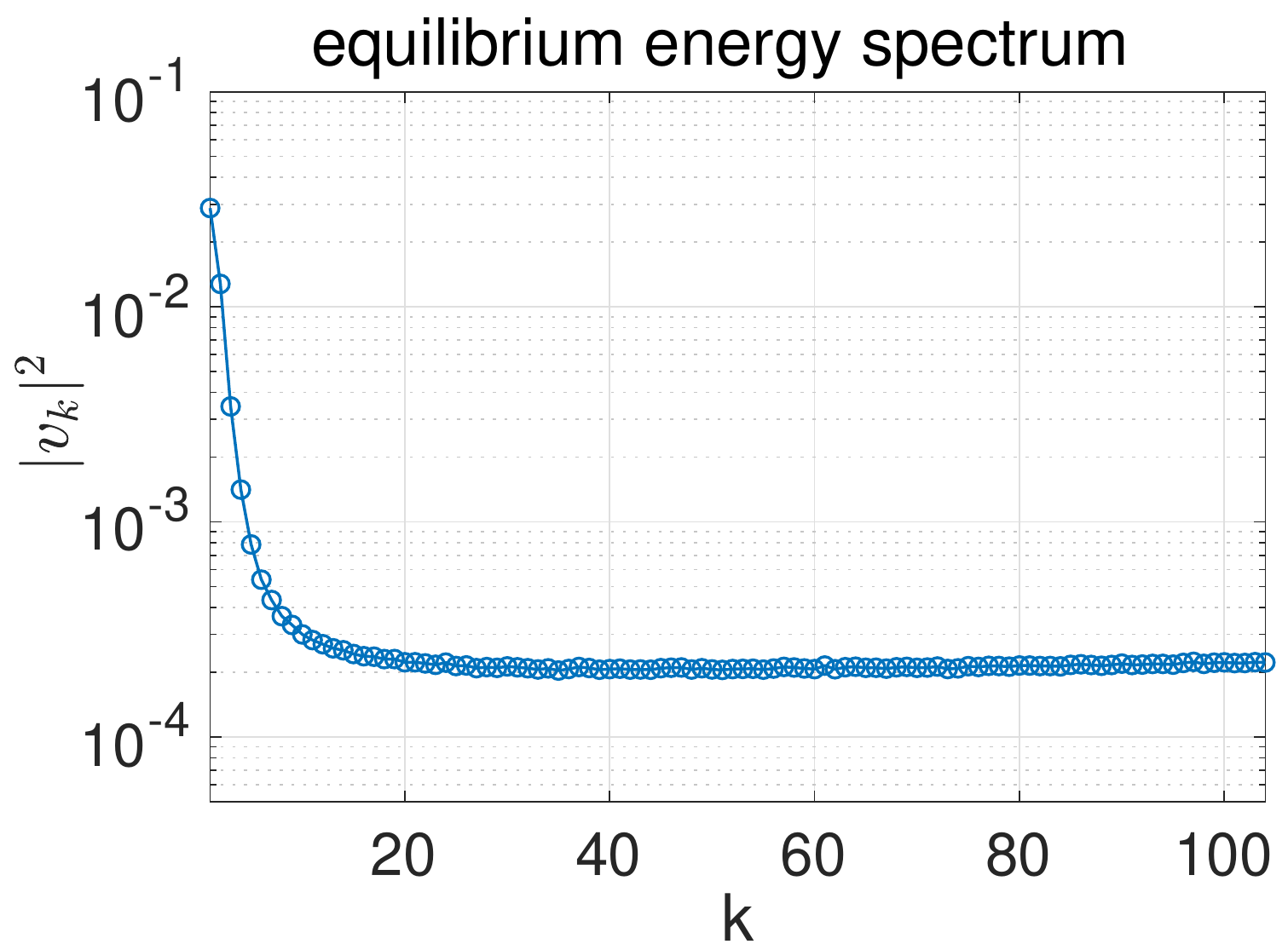}\includegraphics[scale=0.27]{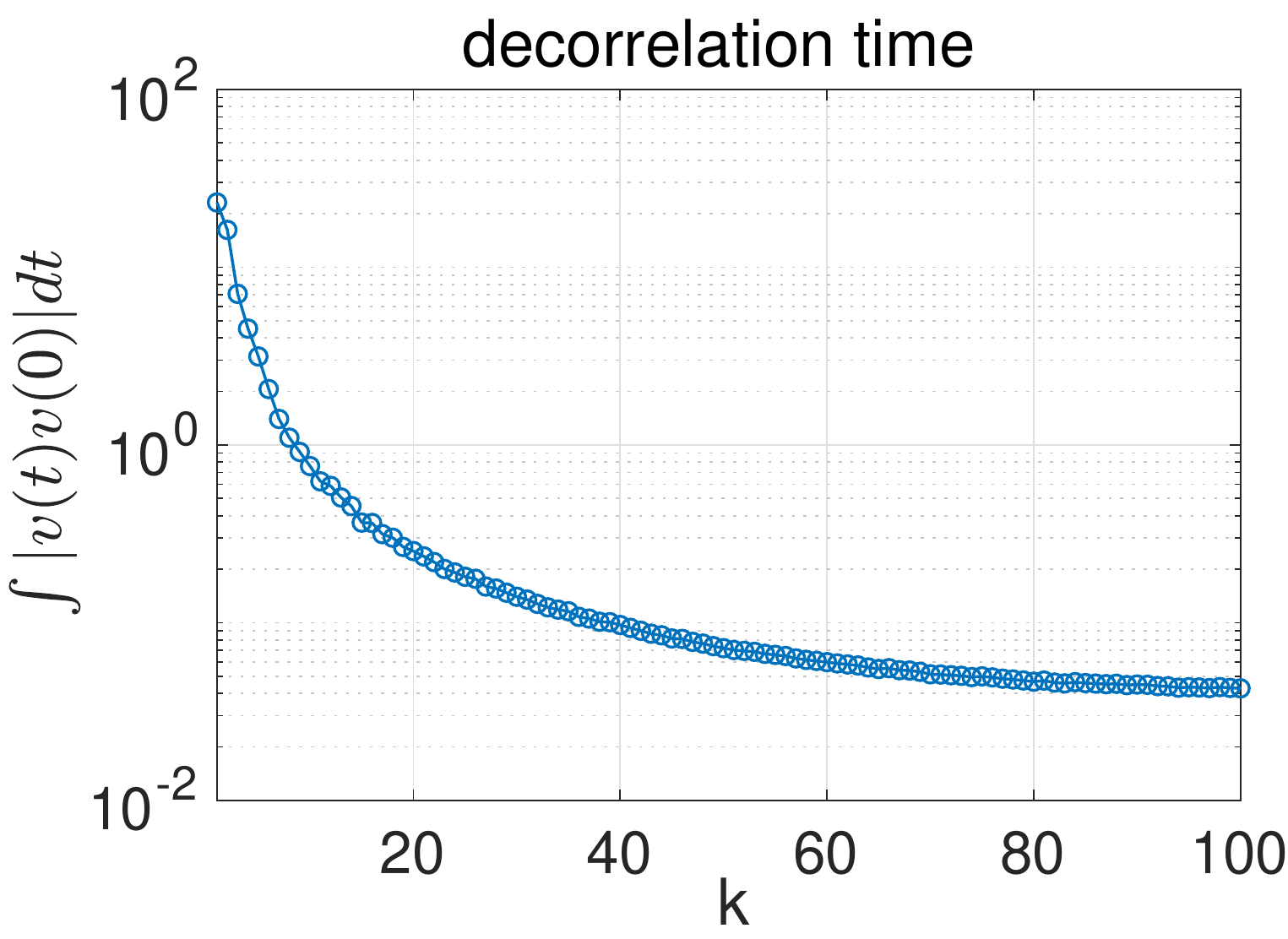}}\subfloat[time-series of energy in large and small scales]{\includegraphics[scale=0.3]{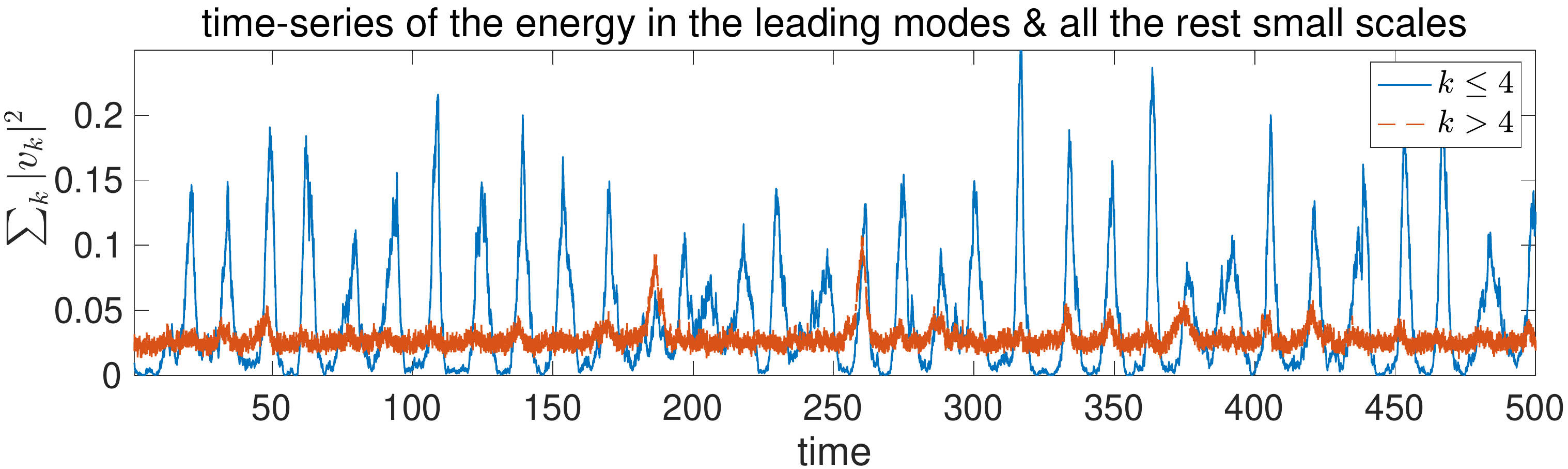}

}

\caption{Statistics of the conceptual turbulence model (\ref{eq:conceptual_model}).
Left: equilibrium energy spectrum and decorrelation time in the fluctuation
modes; Right: time series of the energy in the first 4 leading modes
and energy in all the rest fluctuation modes.\label{fig:Equilibrium-energy-spectrum}}
\end{figure}
First, we check the recovery of the solution trajectories using the
RBM model. Figure \ref{fig:Time-trajectory} plots the typical time-series
of the mean state $\bar{u}$ and the first 4 leading modes $v_{k}$
from direct simulation and the random batch approximation. Highly
non-Gaussian features with intermittent bursts of extreme events can
already be observed in the time-series of the leading modes. The intermittent
flow structure is a key feature to model that is closely related to
the largest-scale mean state. Qualitatively, we observe that the structures
in the most energetic modes are precisely captured from the RBM approach
especially with the random intermittent bursts and extreme events.
Notice that in each time step update of the model, only $p=5$ modes
are included in updating the mean equation randomly picked from the
total $K=100$ small-scale modes. It is shown that the batch size
can be further reduced to even $p=2$ in Figure \ref{fig:Statistical-errors}.
Thus the exhausting computational cost to resolve the long spectrum
of all fluctuation modes in each ensemble member at each time step
can be effectively avoided.

\begin{figure}
\includegraphics[scale=0.36]{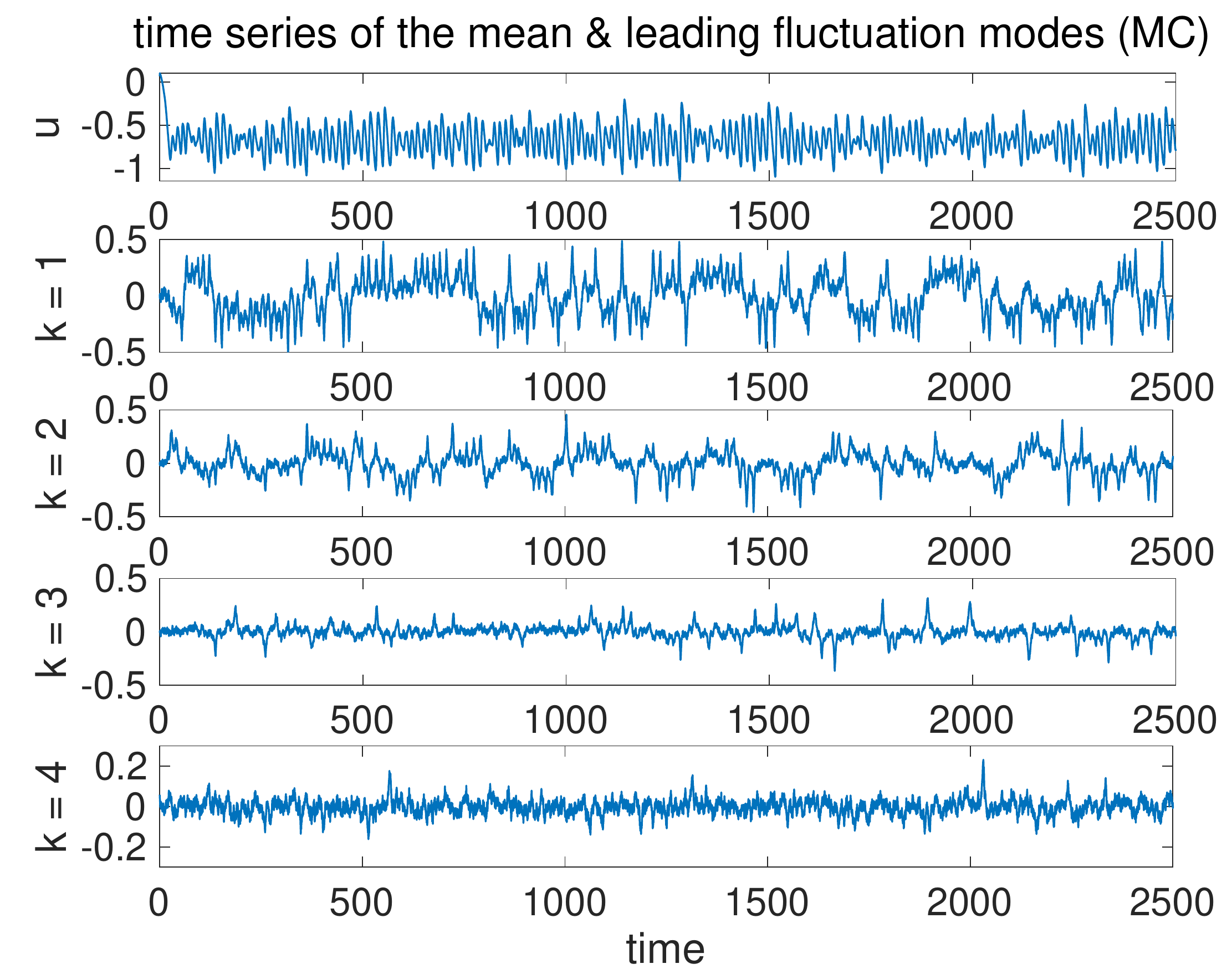}\includegraphics[scale=0.36]{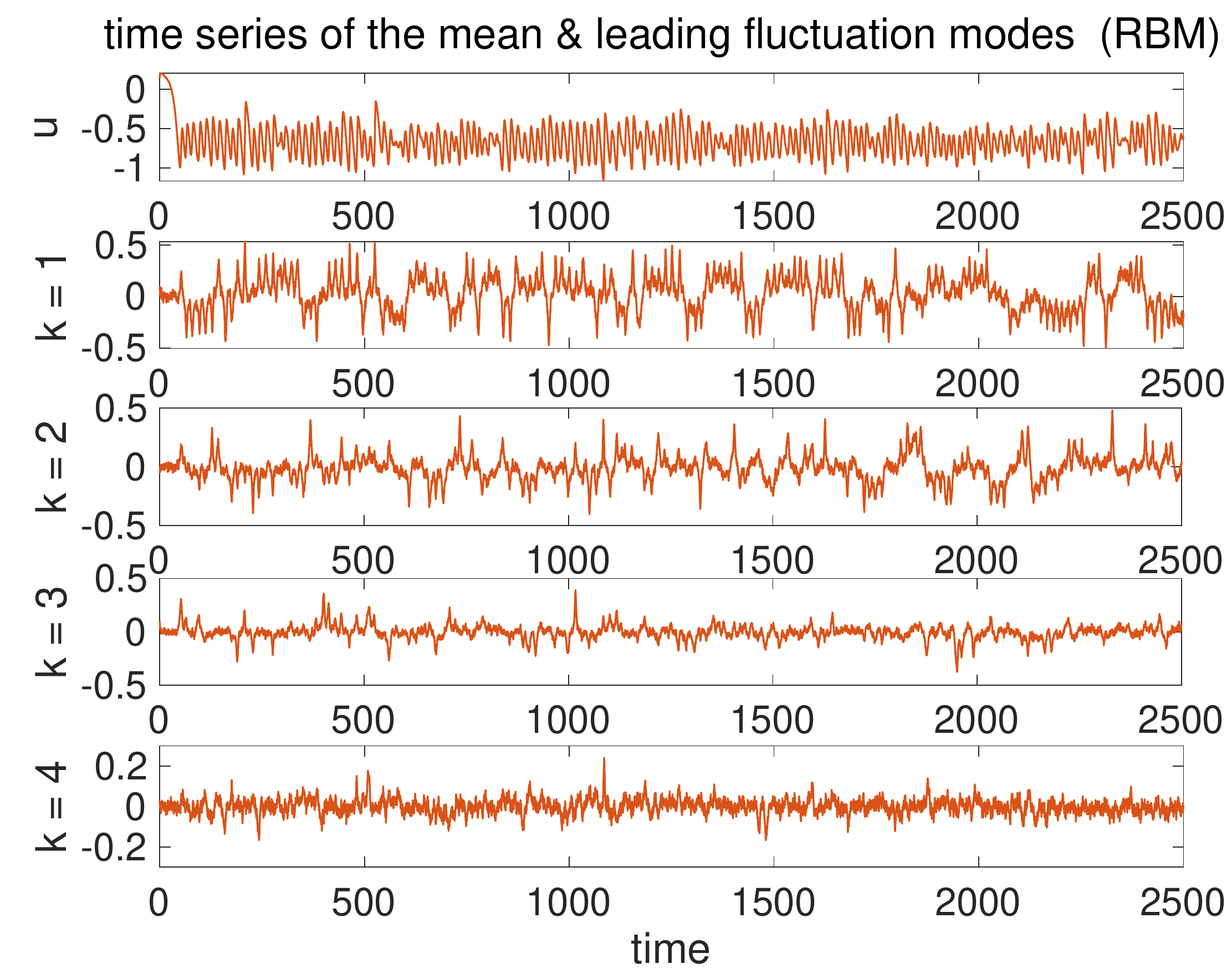}

\caption{Time trajectories of the mean state $\bar{u}$ and the first 4 leading
mode $v_{k}$ from the direct simulation (left) and the random batch
method (right).\label{fig:Time-trajectory}}

\end{figure}
Next, we test the performance of the RBM model in efficient ensemble
prediction of the probability distributions. In particular, the prediction
skill of both the final equilibrium probability distribution and the
transient probability distributions before equilibrium are considered.
In the tests, the initial samples at the starting time are drawn from
independent Gaussian distributions for each mode with mean zero and
a small variance. Thus the PDFs will go through a statistical transition
from the Gaussian initial distribution to the non-Gaussian final equilibrium.
Figure \ref{fig:Predicted-PDFs} compares the prediction of marginal
and joint PDFs in both the starting transient state and the final
statistical equilibrium. First from the sample distributions, we observe
the drastic deviation from Gaussian in the PDFs. Strongly non-Gaussian
structures appear in both the starting transient state and the final
statistical equilibrium. The fat-tailed PDFs refer to the intermittent
extreme flow structures and are of particular interest in the study
of turbulent flows. To capture such extreme features, a very large
sample size is required in the direct MC simulation to sufficiently
characterize the outliers that constitute the PDF tails. In contrast,
the RBM model focuses on the mean and the first 4 dominant modes and
recovers the joint PDFs of these most important modes. It is able
to capture both the transient and equilibrium non-Gaussian PDFs especially
the non-Gaussian PDF tails to a large extend accurately, while saving
the computational cost to a large degree requiring a much smaller
number of samples.

\begin{figure}
\includegraphics[scale=0.45]{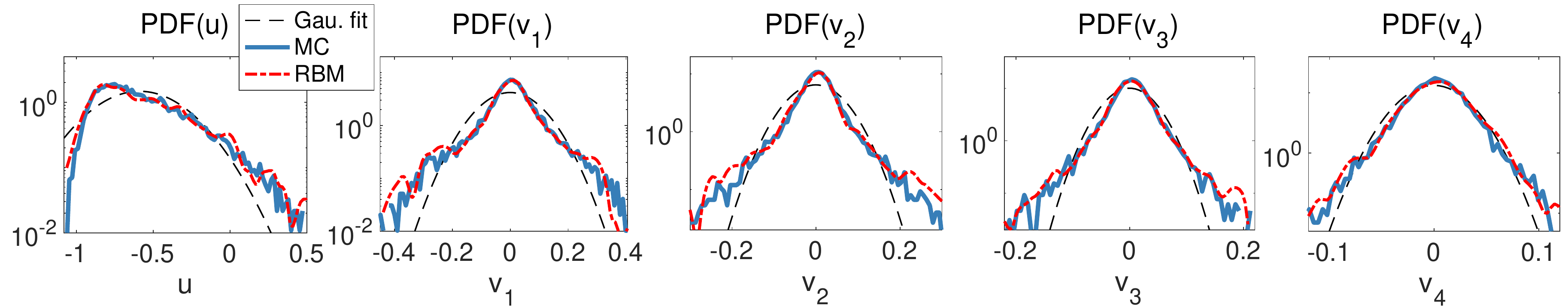}

\vspace{-1.em}

\subfloat[marginal and joint PDFs in starting transient state]{\includegraphics[scale=0.45]{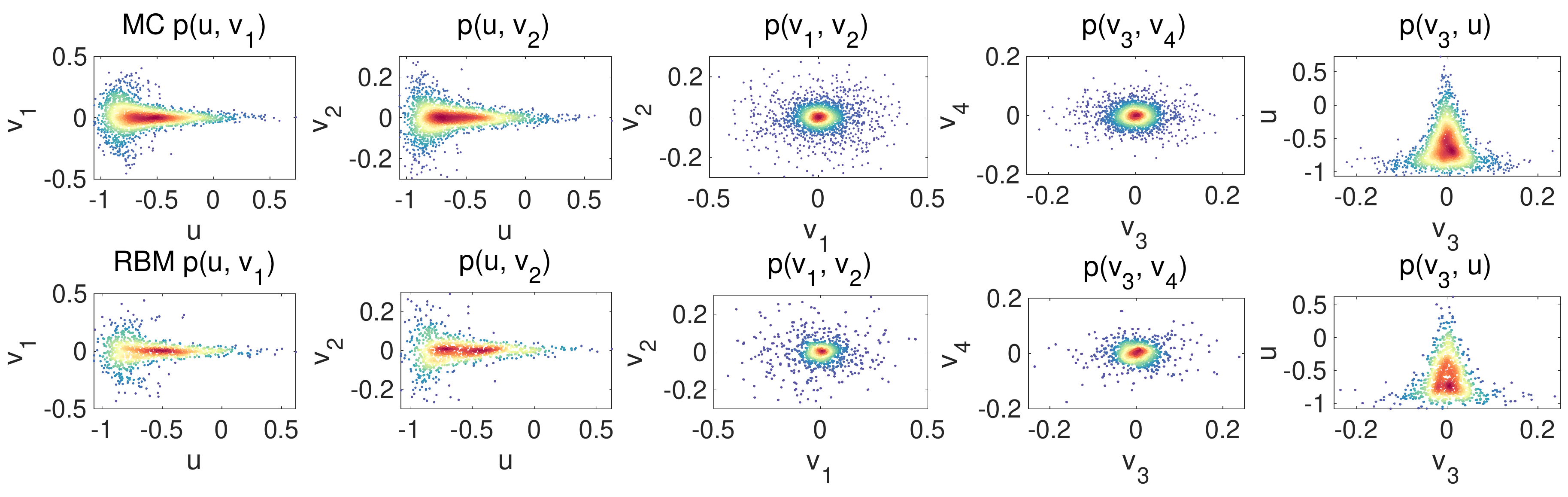}

}

\vspace{1.em}

\includegraphics[scale=0.45]{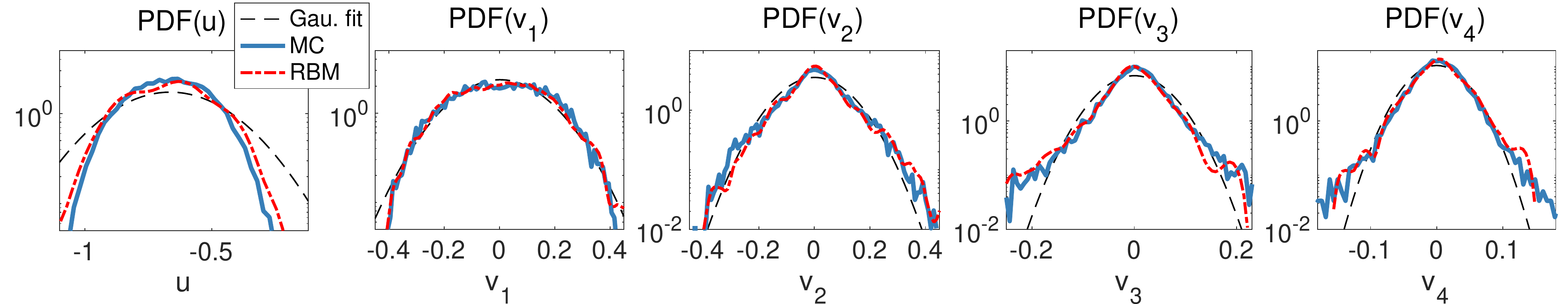}

\vspace{-1.em}

\subfloat[marginal and joint PDFs in final equilibrium state]{\includegraphics[scale=0.45]{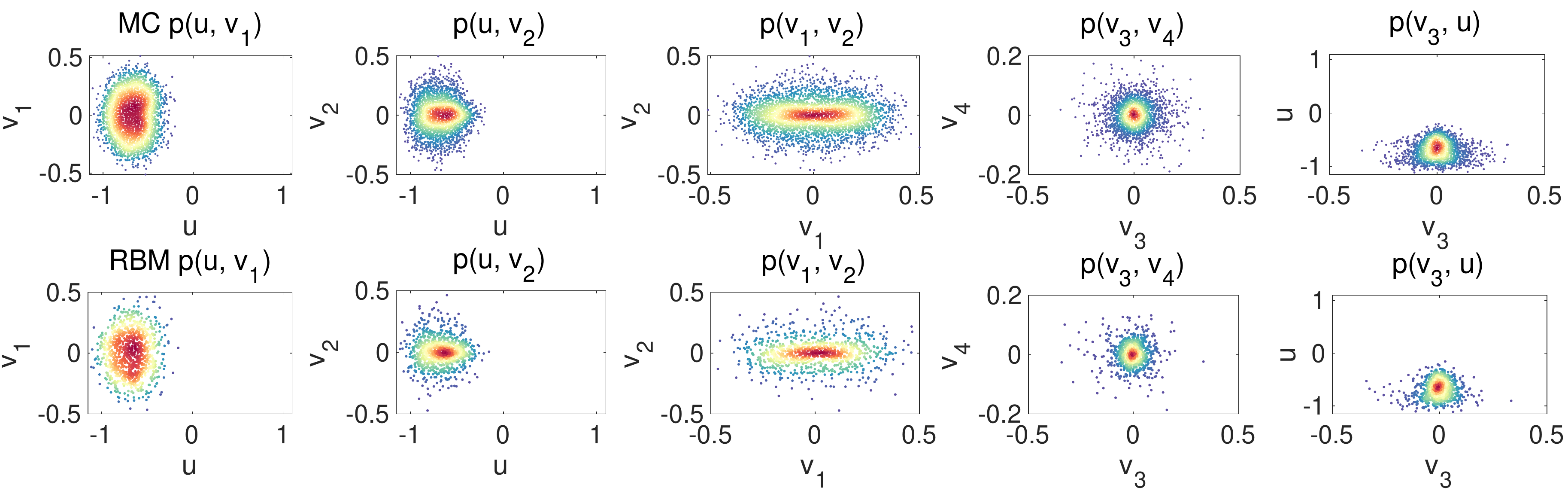}

}

\caption{Predicted PDFs from the direct MC simulation with $N=10000$ samples
and from the RBM model using $N_{1}=100$ samples for the conceptual
model. The 1D and 2D marginal PDFs of the leading modes $\left(\bar{u},v_{1},v_{2},v_{3},v_{4}\right)$
are compared. In the 1D PDFs, the Gaussian fit with the same mean
and variance are plotted in dashed line. The 2D joint PDFs are shown
by scatter plots with colors indicating sample density. The PDFs in
the starting transient state (upper) and the final equilibrium state
(lower) are compared.\label{fig:Predicted-PDFs}}

\end{figure}

\subsection{Random batch algorithm for the topographic barotropic model}

The topographic barotropic flow is a paradigm model in geophysical
turbulence \cite{qi2017low} which generates many representative features
found in real atmosphere and ocean. It identifies the multiscale interactions
between a large-scale mean flow and fluctuations through interaction
with topography. Under projection along one characteristic wavenumber
direction, the \emph{topographic barotropic model of layered topography}
can be expressed in term of a large-scale mean flow $U$ and a wide
spectrum of fluctuation modes $v_{k}$ in complex values as
\begin{equation}
\begin{aligned}\frac{\mathrm{d}U}{\mathrm{d}t}= & \:\frac{1}{K}\sum_{k=-K}^{K}h_{k}^{*}v_{k}-d_{0}U+\sigma_{0}\dot{W}_{0},\\
\frac{\mathrm{d}v_{k}}{\mathrm{d}t}= & \:i\left(k^{-1}\beta-kU\right)v_{k}-Uh_{k}-d_{k}v_{k}+\sigma_{k}\dot{W}_{k},\quad\left|k\right|\leq K
\end{aligned}
\label{eq:topo_model}
\end{equation}
Above, the complex flow modes satisfying $v_{-k}=v_{k}^{*}$ are derived
from the Fourier expansion of the fluctuation flow field. The model
(\ref{eq:topo_model}) constitutes a system of dimension $1+2K$.
The fixed complex modes $h_{k}$ represent the topographic effect,
and $\beta,d_{0},d_{k},\sigma_{0},\sigma_{k}$ are other model parameters
representing rotation, damping and unresolved forcing in large and
small scales. The topographic model eliminates the nonlinear interactions
between the fluctuation modes and focuses on the coupling effect between
the large and small scales. The equations (\ref{eq:topo_model}) becomes
consistent with the general multiscale model framework (\ref{eq:model_decoupled})
by introducing the auxiliary dynamics $\frac{\mathrm{d}h_{k}}{\mathrm{d}t}\equiv0$
which represents the constant topographic structure. Therefore, Algorithm
\ref{alg:Sample-RBM} still applies to implementing the RBM model
for this system. The topographic model also generates many key features
of interests in turbulent flows, including skewed non-Gaussian PDFs
and the related new type of extreme events. Different from the previous
conceptual model (\ref{eq:conceptual_model}), the small-scale feedback
to the mean $U$ is coupled through combined interaction with the
topographic mode $h_{k}$, and instability in the leading fluctuation
modes is introduced from the coupling with topographic stress. A detailed
derivation from the two-dimensional barotropic model and many desirable
properties such as conservation of energy can be found in \cite{majda2006nonlinear,majda1999simplified}.

The model parameters in (\ref{eq:topo_model}) are listed in Table
\ref{tab:Model-parameters-topographic}. The topographic modes $h_{k}$
are defined by the Fourier modes of the spatial topography structure
as a combination of two major large scales and multiple small-scale
random perturbations
\[
h=H\left(\sin x+\cos x\right)+\frac{H}{2}\left(\sin2x+\cos2x\right)+\sum_{3\leq\left|k\right|\leq K}e^{i\left(kx+\theta_{k}\right)},
\]
with $\theta_{k}$ the random phase shift drawn independently from
a uniform distribution in $\left[0,2\pi\right)$. A white noise forcing
with small amplitude $\sigma_{0}=\frac{1}{4\sqrt{2}}$ is also added
in large scale to induce stronger chaotic dynamics in the mean state.
This can also fit into the model framework with simple generalization.
In the small scales, $\beta=2$ represents the rotational effect of
the flow. Again we adopt the Kolmogorov spectrum $E_{k}=\frac{\sigma_{k}^{2}}{2d_{k}}=E_{0}k^{-5/3}$
in the first two complex modes $\left|k\right|\leq K_{1}=2$, and
with equipartition of energy for all the small scales, $\frac{\sigma_{k}^{2}}{2d_{k}}=E_{0}K_{1}^{-5/3},\left|k\right|>K_{1}$.
These choices of parameter values are following the non-dimensionalization
of the real physics measurements of the characteristic scales \cite{majda2006nonlinear}.
We first demonstrate the model statistical features in Figure \ref{fig:Equilibrium-energy-topo}.
The equilibrium energy spectrum and the decorrelation time in the
fluctuation modes display again the decaying energy and fast mixing
rate in small scales typical in turbulent flows. The small-scale fluctuation
modes contain small energy and fast decaying autocorrelation in each
single mode while they constitute a major contribution in their combined
feedback to the mean state. The bursts in the first two leading modes
imply the occurrence of extreme events excited by such multiscale
interaction. Thus the small scales play an non-negligible role and
are suitable for the RBM approximation.

\begin{table}
\begin{centering}
\begin{tabular}{cccccccccccccc}
\toprule 
$K$ & $H$ & $\beta$ &  & $d_{0}$ & $\sigma_{0}$ & $d_{k}$ & $\sigma_{k}$ & $E_{0}$ &  & $K_{1}$ & $p$ & $N$ (full MC) & $N_{1}$ (RBM)\tabularnewline
\midrule 
100 & 1 & 2 &  & 0.0125 & $\frac{1}{4\sqrt{2}}$ & $0.0125k$ & $\sqrt{2E_{k}d_{k}}$ & 0.02 &  & 2 & 5 & 10000 & 100\tabularnewline
\bottomrule
\end{tabular}
\par\end{centering}
\caption{Parameter values for the conceptual turbulent model.\label{tab:Model-parameters-topographic}}
\end{table}
\begin{figure}
\subfloat[equilibrium spectrum and decorrelation time]{\includegraphics[scale=0.25]{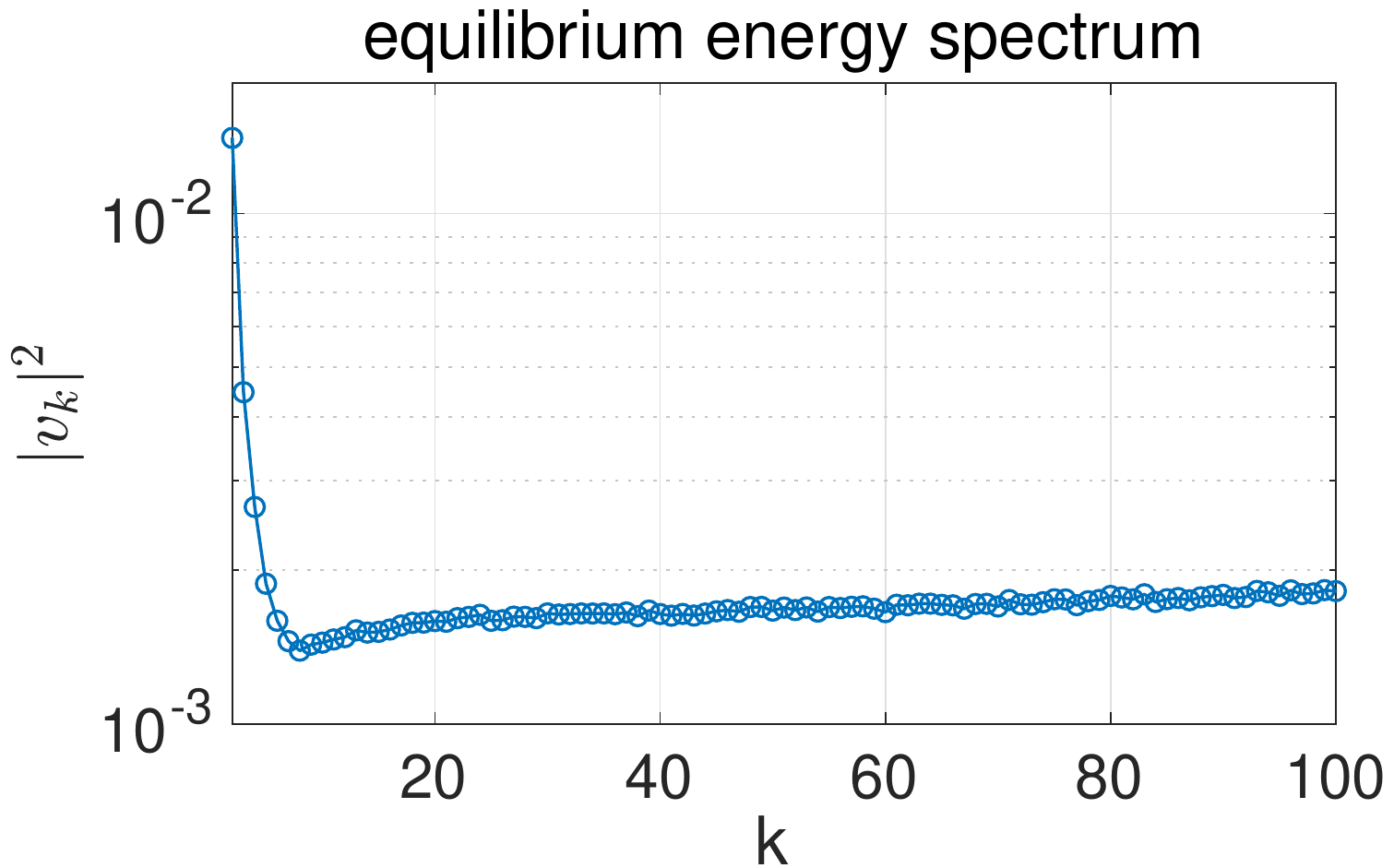}\includegraphics[scale=0.25]{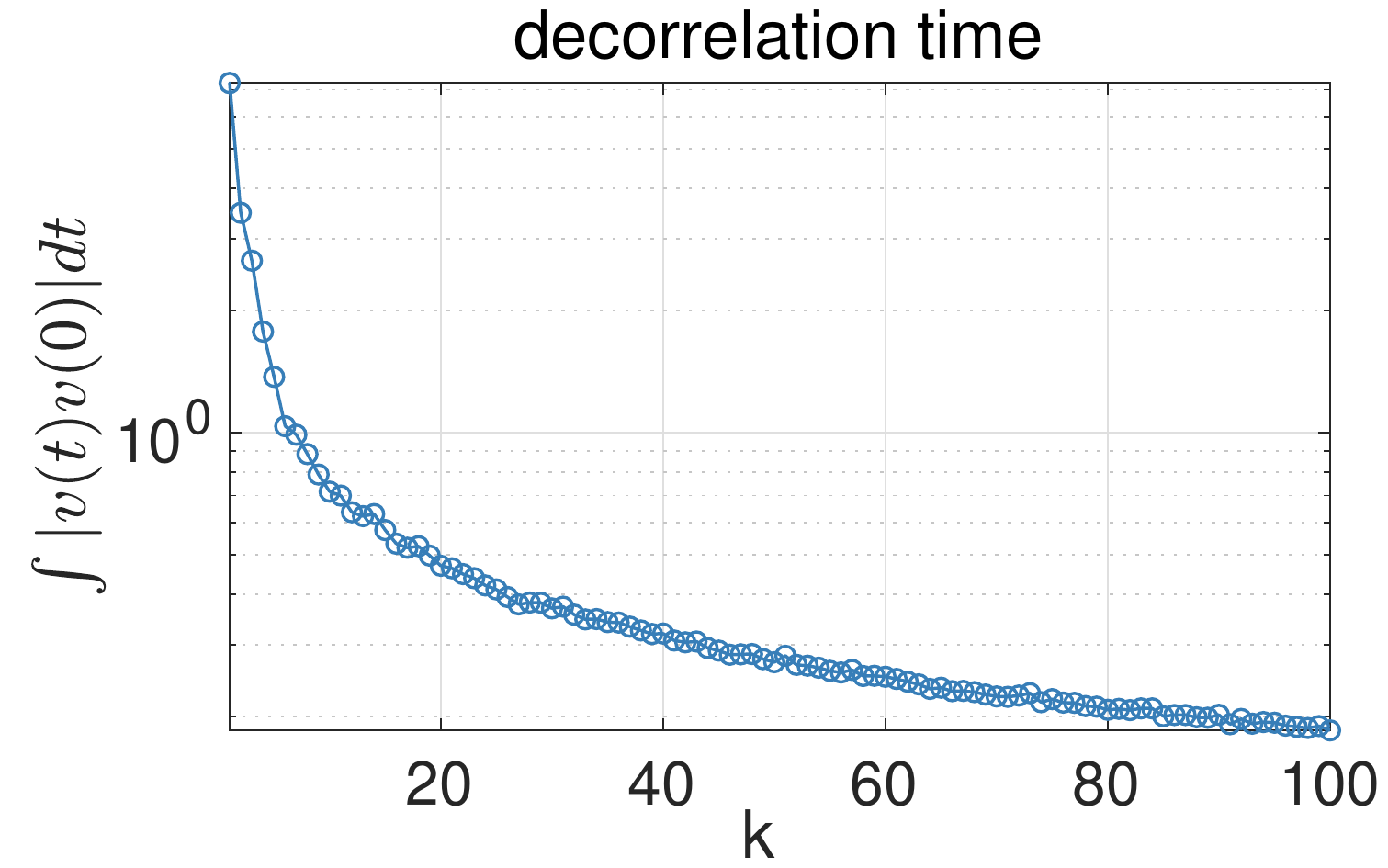}

}\hspace{-.5em}\subfloat[time-series of energy in large and small scales]{\includegraphics[scale=0.33]{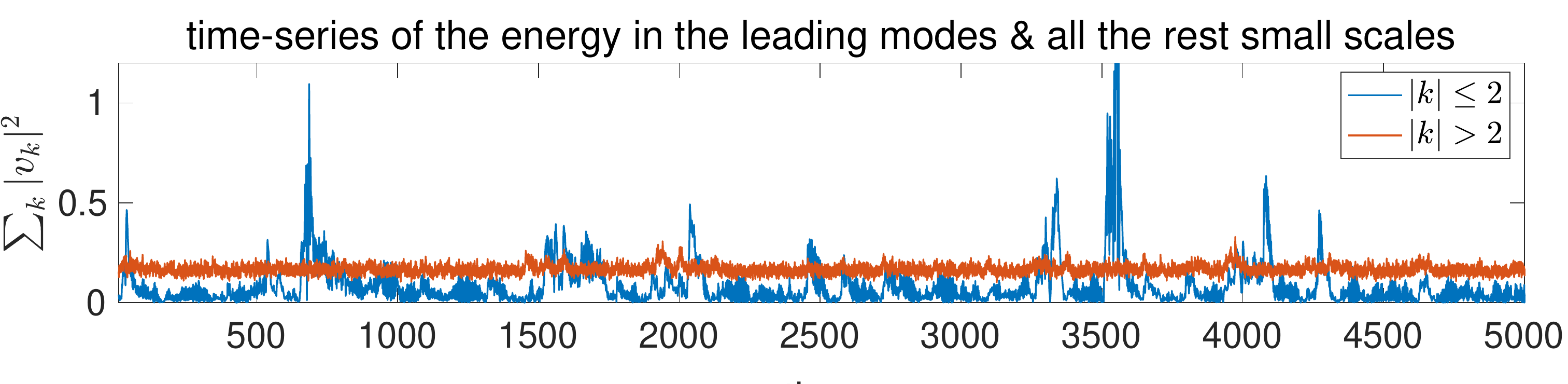}

}

\caption{Statistics of the topographic barotropic model (\ref{eq:topo_model}).
Left: equilibrium energy spectrum and decorrelation time in the fluctuation
modes; Right: time-series of the energy in the first 2 leading modes
and energy in all the rest fluctuation modes.\label{fig:Equilibrium-energy-topo}}
\end{figure}
In the test for ensemble forecast, we aim to capture the PDFs in the
mean flow state $U$ together with the first two leading modes $v_{1},v_{2}$.
Since the fluctuation modes are in complex values, it forms a 5-dimensional
subspace compared with the full model dimension $1+2K=201$. A large
ensemble size $N=1\times10^{4}$ is again needed to sufficiently sample
the high-dimensional phase space of the system in the full MC model,
while in contrast the RBM model uses a much smaller ensemble of $N_{1}=100$
samples. We put the detailed RBM model formulation of the topographic
model (\ref{eq:topo_model}) following Algorithm \ref{alg:Sample-RBM}
in Appendix \ref{subsec:RBM-topographic}. In this topographic model,
one typical feature is the intermittent bursts of extreme events in
both the mean flow $U$ and the leading fluctuation modes $v_{1},v_{2}$,
reflected by the skewed fat-tails in the resulting PDFs. This makes
a even more challenging case for accurate ensemble forecast since
it usually requires a much larger ensemble size to capture the extreme
events in the asymmetric PDF tails with accuracy. The RBM prediction
for the marginal PDFs and the joint PDFs in the resolved leading modes
is shown in Figure \ref{fig:Predicted-PDFs-topo} compared with the
direct MC results. As shown in both the marginal PDFs of the leading
modes and the joint distributions, complicated non-Gaussian structures
are generated during the evolution of the states. Again, the RBM model
maintains the high skill to capture the skewed PDF structures while
greatly reducing the computational cost.

\begin{figure}
\subfloat{\includegraphics[scale=0.45]{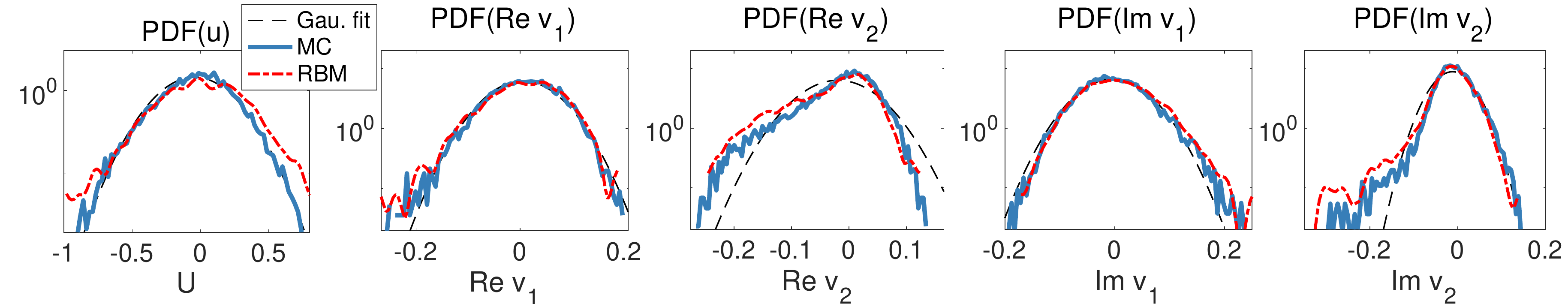}}

\vspace{-1.em}

\subfloat{\includegraphics[scale=0.45]{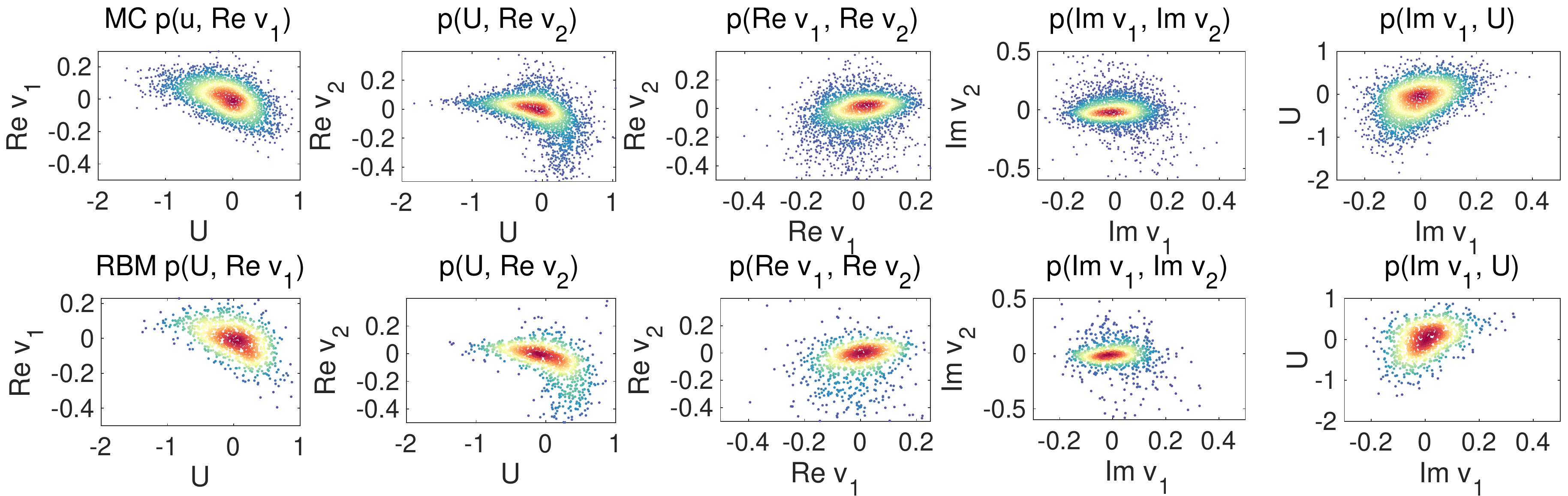}}

\caption{Predicted PDFs from the direct MC simulation with $N=10000$ samples
and from the RBM model using $N_{1}=100$ samples for the topographic
model. The 1D and 2D marginal PDFs of the leading modes $\left(U,\mathfrak{Re}v_{1},\mathfrak{Re}v_{2},\mathfrak{Im}v_{1},\mathfrak{Im}v_{2}\right)$
are compared. In the 1D PDFs, the Gaussian fit with the same mean
and variance are plotted in dashed line. The 2D joint PDFs are shown
by scatter plots with colors indicating sample density.\label{fig:Predicted-PDFs-topo}}
\end{figure}
Finally, we offer a quantitative quantification for the prediction
errors by measuring the empirical statistics in the first two moments
as $E_{2}=\left|\frac{1}{N}\sum_{i}\left|u^{\left(i\right)}\right|^{2}-\frac{1}{N}\sum_{i}\left|\tilde{u}^{\left(i\right)}\right|^{2}\right|$
with $u$ the mean state of the full MC model and $\tilde{u}$ the
RBM solution. A large sample size $N=1\times10^{4}$ is used to reduce
the error from the empirical sample average approximation. In Figure
\ref{fig:Statistical-errors}, we plot the evolution of errors for
the two test models. Notice that there still exist errors from the
ensemble approximation of the expectation due to the finite sample
size $N$. It is observed that the RBM model maintains accurate prediction
skill of the statistics with small errors during the model evolution.
The errors gradually grow in time and will saturate at a low level
when the system reaches statistical equilibrium. As a further comparison,
we also compare the errors under different batch sizes $p$. It shows
that we can even push the batch sizes to an extreme $p=2$ and the
model still provides accurate prediction with just a slightly larger
error. Overall, this confirms the robust performance of the RBM model
subject to different turbulent dynamical features and for different
statistical regime in both transient and final equilibrium state.

\begin{figure}
\subfloat[conceptual model]{\includegraphics[scale=0.38]{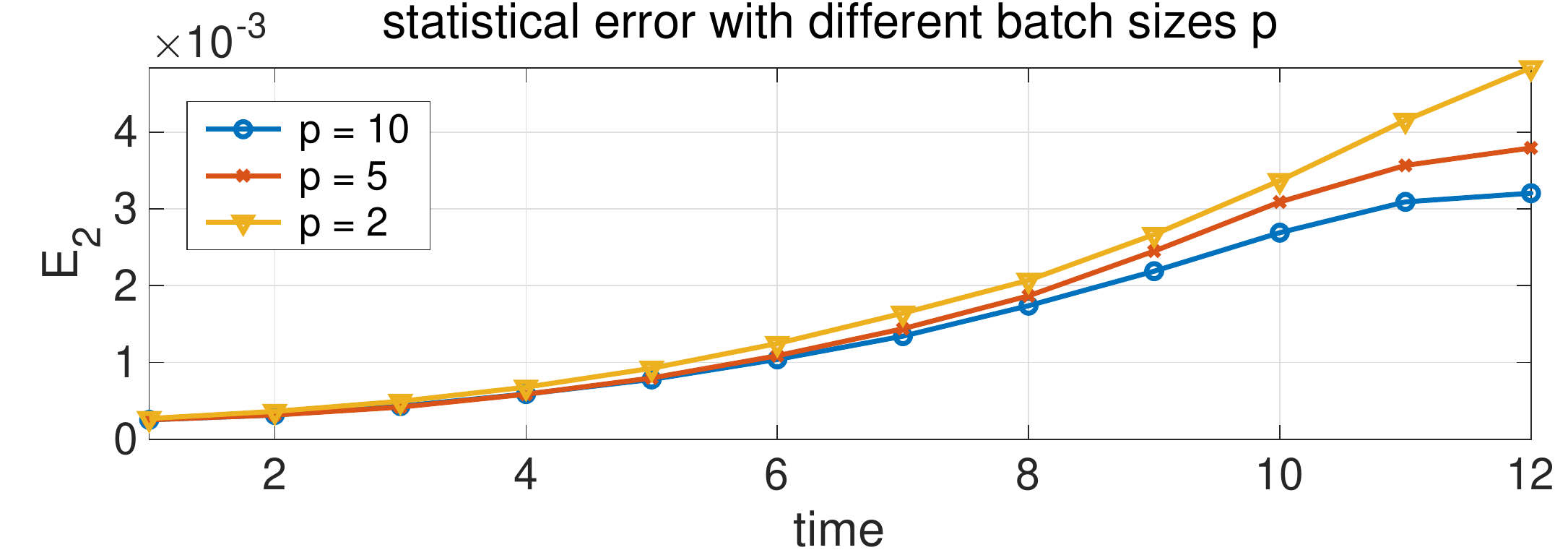}

}\subfloat[topographic model]{\includegraphics[scale=0.38]{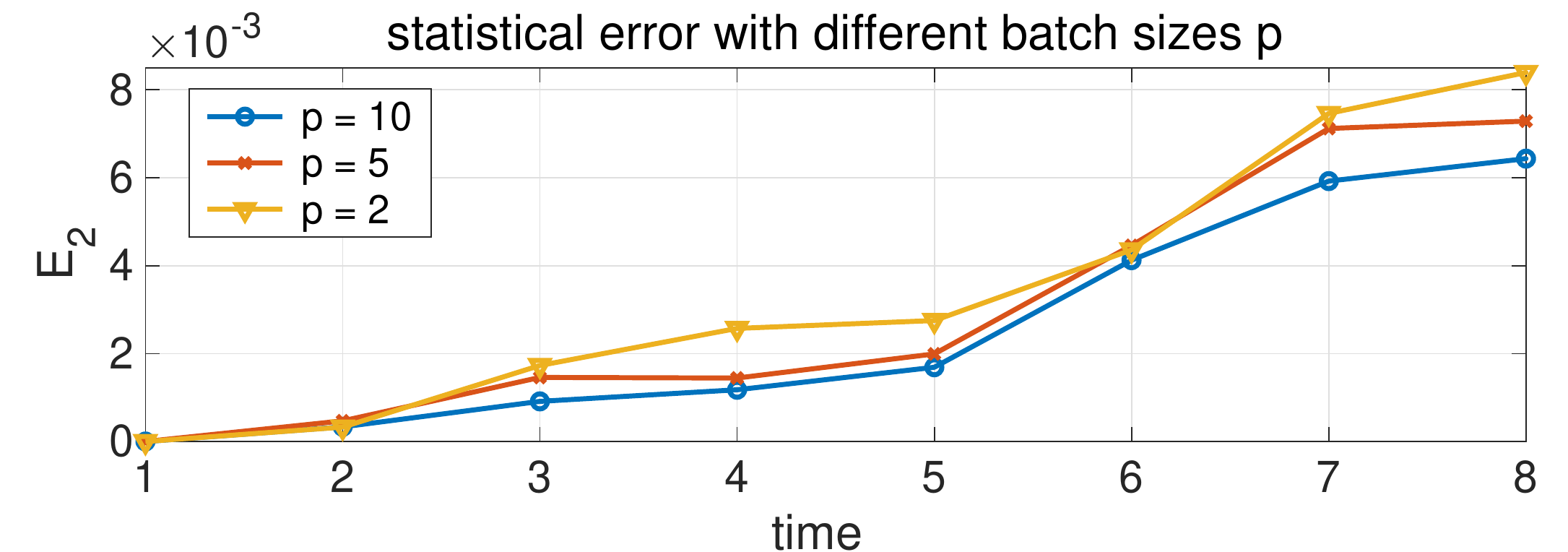}

}

\caption{Statistical errors $E_{2}=\left|\frac{1}{N}\sum_{i}\left|u^{\left(i\right)}\right|^{2}-\frac{1}{N}\sum_{i}\left|\tilde{u}^{\left(i\right)}\right|^{2}\right|$
of the RBM model in the time development of the conceptual model (left)
and the topographic model (right). The results with different batch
sizes $p=2,5,10$ are also compared.\label{fig:Statistical-errors}}

\end{figure}

\section{Summary\label{sec:Summary}}

We developed a new efficient ensemble prediction strategy to model
and forecast the time evolution of probability distributions and the
associated statistical features in the dominant large-scale states
of complex turbulent systems with coupled multiscale structure. Standard
Monte-Carlo simulation of a high-dimensional turbulent system suffers
the curse-of-dimensionality thus requires an unaffordable ensemble
size to even maintain a low-order approximation. The proposed RBM
model circumvents the inherent difficulty by just sampling a low-dimensional
subspace which contains the dominant large-scale states, while the
contributions from all the small-scale fluctuation modes are fully
considered through a random batch decomposition. The wide spectrum
of the $K$ small-scale fluctuation modes is randomly divided into
small batches of size $p$ at the start of each time updating step,
and each of the $K/p$ batches is associated with one of the $N$
large scale-state samples to update the coupled nonlinear feedback
term computed inside the batch. The modes in each batch serves as
the different samples to compute the small-scale feedback in the ensemble
update of the large-scale state. The computational cost is then greatly
reduced based on the random batch decomposition which avoids the expensive
ensemble simulation of the large number of small-scale fluctuation
modes. The true multiscale dynamics is recovered in the efficient
algorithm due to the frequent resampling of the batches at each time
updating step and the fast mixing rate of the ergodic small-scale
fluctuation modes. The resulting algorithm is also very easy to implement
for a general group of multiscale turbulent models capable of creating
a wide variety of realistic complex phenomena. Therefore, the efficient
RBM model developed here provides a useful tool for improving the
understanding of various turbulent features observed in natural and
engineering systems, and the further development of effective methods
in uncertainty quantification and data assimilation \cite{reich2015probabilistic,majda2016introduction}
of complex turbulent systems.

In the analysis of the RBM model for the coupled large and small scale
turbulent systems, the approximation errors under statistical expectation
of the empirical ensemble average are derived by comparing the semigroups
generated by the backward equation of the original model and the RBM
approximation. The error is shown to be only related to the numerical
time step and independent of the sample size and the full dimension
of the system. The RBM model is then applied to two representative
turbulent models with close link to several realistic phenomena such
as extreme events and intermittent instability. One central issue
in practical forecast of turbulent systems concerns the accurate characterization
of extreme events represented in the long extended PDF tails and the
deviation from the Gaussian distribution. The RBM model is show to
have uniformly high skill in predicting various different structures
in the PDFs during the time evolution in both test models driven by
different types of coupling mechanisms. Only a very small ensemble
size $N=100$ is needed to achieve sufficient accuracy for models
with a high dimensional fluctuation state dimension $K=100$ and $K=200$.
In contrast, the direct MC simulation requires at least $N=1\times10^{4}$
samples to reach a relatively high accuracy. In the future development
in more realistic applications, the RBM model shows potential to overcome
the curse-of-dimensionality for a wider group of practical problems
involving fully turbulent high dimensional flows.

\section*{Acknowledge}

The research of J.-G. L. is partially supported under the NSF grant
No. DMS-2106988. The research of D.Q. is partially supported by the
start-up funds and the PCCRC Seed Funding provided by Purdue University.

\appendix
\renewcommand\theequation{A\arabic{equation}}
\setcounter{equation}{0}

\section{Proofs of the Lemmas\label{appen1:Proofs-of-lemmas}}
\begin{proof}
[Proof of Lemma \ref{lem:coeff_exp}]This is the direct conclusion
by counting the number of ordered combinations of the $n$ random
batches. First, we define the total number of ways of listing $np$
distinguishable objects into $n$ ordered batches of size $p$ as
\[
n!M\left(n\right)=\frac{\left(np\right)!}{\left(p!\right)^{n}}.
\]
We use $M\left(n\right)$ to denote the number of combinations to
put $np$ objects to $p$ groups without order. Next, for the $k$-th
mode falling in the batch $i$, we determine the $i$-th batch to
contain $k$ and select the other $p-1$ objects in this batch from
the remaining $np-1$ objects, then order the rest $n-1$ batches.
This gives
\[
\mathbb{E}I_{i}\left(k\right)=\frac{\begin{pmatrix}np-1\\
p-1
\end{pmatrix}\left(n-1\right)!M\left(n-1\right)}{n!M\left(n\right)}=\frac{1}{n}=\frac{p}{K}.
\]
Similarly, for two modes $k\ne l$ falling in the same batch $i$,
we put these two objects together with other $p-2$ objects, then
still order the rest $n-1$ batches, which gives
\[
\mathbb{E}I_{i}\left(k\right)I_{i}\left(l\right)=\frac{\begin{pmatrix}np-2\\
p-2
\end{pmatrix}\left(n-1\right)!M\left(n-1\right)}{n!M\left(n\right)}=\frac{1}{n}\frac{p-1}{np-1}=\frac{p}{K}\frac{p-1}{K-1}.
\]
\end{proof}
\
\begin{proof}
[Proof of Lemma \ref{lem:flucs_bound}]Applying Ito's Lemma for $f\left(z\right)=\left\Vert z\right\Vert ^{2q}=\left(\sum_{k}\left|z_{k}\right|^{2}\right)^{q}$
according to the SDE of $Z_{k}$ in (\ref{eq:mc_analysis}), we have
the conditional expectation $\mathbb{E}_{u}$ with a fixed $u$ as
\begin{align*}
\frac{\mathrm{d}}{\mathrm{d}t}\mathbb{E}_{u}\left\Vert Z\right\Vert ^{2q}= & 2q\sum_{k}\left(\gamma_{k}\left(u\right)-d_{k}\right)\mathbb{E}_{u}Z_{k}^{2}\left\Vert Z\right\Vert ^{2\left(q-1\right)}\\
 & +q\mathbb{E}_{u}\left\Vert Z\right\Vert ^{2\left(q-1\right)}\sum_{k}\sigma_{k}^{2}+2q\left(q-1\right)\sum_{k}\sigma_{k}^{2}\mathbb{E}_{u}Z_{k}^{2}\left\Vert Z\right\Vert ^{2\left(q-2\right)}.
\end{align*}
From (\ref{eq:assump3}) in Assumption \ref{assu:assump_coeffs},
the coefficients are uniformly bounded, $d_{k}-\gamma_{k}\left(u\right)\geq r>0$
and $\sum_{k}\sigma_{k}^{2}\leq C$. Thus
\[
\frac{\mathrm{d}}{\mathrm{d}t}\mathbb{E}_{u}\left\Vert Z\right\Vert ^{2q}\leq-2qr\mathbb{E}_{u}\left\Vert Z\right\Vert ^{2q}+q^{2}C^{\prime}\mathbb{E}_{u}\left\Vert Z\right\Vert ^{2\left(q-1\right)}.
\]
First for $q=1$, the last term on the above inequality becomes a
constant, thus $\mathbb{E}\left\Vert Z_{t}\right\Vert ^{2}=\mathbb{E}^{U}\mathbb{E}\left\Vert Z\left(t\mid U\right)\right\Vert ^{2}\leq\frac{C^{\prime}}{2r}\equiv C_{1}$.
Next, we have by induction $\mathbb{E}\left\Vert Z_{t}\right\Vert ^{2q}\leq C_{q}$
for any integer $q\geq1$. In the same way, we have $\mathbb{E}\left\Vert \tilde{Z}_{t}\right\Vert ^{2q}<C_{q}$
for any $t>0$.
\end{proof}
\
\begin{proof}
[Proof of Lemma \ref{lem:action_bound}]By using the backward Kolmogorov
equation \eqref{eq:backward} for $w_{z}\left(x,t\right)$ and taking
its derivative about the $i$-th coordinate $x_{i}\in\mathbb{R}^{d}$
with $i=1,\cdots,N$, it yields
\[
\partial_{t}\partial_{x_{i}}w_{z}=\mathcal{L}_{z}\partial_{x_{i}}w_{z}+\nabla V\left(x_{i}\right)\cdot\partial_{x_{i}}w_{z}.
\]
Above, we define the vector function $\partial_{x_{i}}w_{z}=\left\{ \partial_{x_{i}^{j}}w_{z}\right\} _{j=1}^{d}\in\mathbb{R}^{d}$
with $x_{i}=\left\{ x_{i}^{j}\right\} _{j}$ and $\nabla V\left(x\right)=\left\{ \partial_{x^{j}}V_{j}\right\} _{j}:\mathbb{R}^{d}\rightarrow\mathbb{R}^{d\times d}$.
From the definition of the generator, the only term that is dependent
on $x$ in $\mathcal{L}_{z}$ is $V\left(x_{i}\right)$. This leads
to the formal solution 
\[
\partial_{x_{i}}w_{z}\left(x,t\right)=e^{t\mathcal{L}_{z}}\partial_{x_{i}}w_{z}\left(x,0\right)+\int_{0}^{t}e^{\left(t-s\right)\mathcal{L}_{z}}\nabla V\left(x_{i}\right)\cdot\partial_{x_{i}}w_{z}\left(x,s\right)\mathrm{d}s.
\]
By the contraction of the semigroup $e^{t\mathcal{L}_{z}}$, we have
\[
\left\Vert e^{t\mathcal{L}_{z}}w_{z}\left(x,0\right)\right\Vert _{\infty}\leq\left\Vert w_{z}\left(x,0\right)\right\Vert _{\infty},
\]
and 
\[
\left\Vert e^{\left(t-s\right)\mathcal{L}_{z}}\nabla V\left(x_{i}\right)\cdot\partial_{x_{i}}w_{z}\left(x,s\right)\right\Vert _{\infty}\leq\left\Vert \nabla V\left(x_{i}\right)\cdot\partial_{x_{i}}w_{z}\left(x,s\right)\right\Vert _{\infty}.
\]
Therefore, using the uniform boundedness of $V$ in (\ref{eq:assump2})
\[
\begin{aligned}\left\Vert \partial_{x_{i}}w_{z}\left(x,t\right)\right\Vert _{\infty} & \leq\left\Vert \partial_{x_{i}}w_{z}\left(x,0\right)\right\Vert _{\infty}+\int_{0}^{t}\left\Vert \nabla V\left(x_{i}\right)\cdot\partial_{x_{i}}w_{z}\left(x,s\right)\right\Vert _{\infty}\mathrm{d}s\\
 & \leq\left\Vert \partial_{x_{i}}w\left(x,0\right)\right\Vert _{\infty}+C\int_{0}^{t}\left\Vert \partial_{x_{i}}w_{z}\left(x,s\right)\right\Vert _{\infty}\mathrm{d}s.
\end{aligned}
\]
Using the integral form of Gr\"{o}nwall's inequality, we get
\[
\left\Vert \partial_{x_{i}}w\left(x,t\right)\right\Vert _{\infty}\leq\mathbb{E}_{Z}\left\Vert \partial_{x_{i}}w_{Z}\left(x,t\right)\right\Vert _{\infty}\leq C\left(t\right)\left\Vert \partial_{x_{i}}w\left(x,0\right)\right\Vert _{\infty}\leq\frac{C\left(t,\varphi\right)}{N}.
\]
Above in the last inequality, by definition $w\left(x,0\right)=\frac{1}{N}\sum_{i=1}^{N}\varphi\left(x_{i}\right)$,
then $\partial_{x_{i}}w\left(x,0\right)=\frac{1}{N}\nabla\varphi\left(x_{i}\right)$
is bounded since $\varphi\in C_{b}^{2}$.

Next, by taking a second derivative about $x_{j}$ on the backward
equation, we have
\[
\partial_{t}\partial_{x_{i}x_{j}}^{2}w_{z}=\mathcal{L}_{z}\partial_{x_{i}x_{j}}^{2}w_{z}+\left[\nabla V\left(x_{i}\right)+\nabla V\left(x_{j}\right)\right]\cdot\partial_{x_{i}x_{j}}^{2}w_{z}+\delta_{ij}\nabla^{2}V\left(x_{i}\right)\cdot\partial_{x_{i}}w_{z}.
\]
Above, the last term on the right hand side only appears when $i=j$.
Following the same argument as before, and using the boundedness of
the initial condition $\partial_{x_{i}x_{j}}^{2}w\left(x,0\right)=\frac{1}{N}\nabla^{2}\varphi\left(x_{i}\right)\delta_{ij}$
and $\nabla^{2}V$, we first get the bound for second derivative about
$i,j$
\[
\left\Vert \partial_{x_{i}x_{j}}^{2}w\left(x,t\right)\right\Vert _{\infty}\leq\frac{C}{N}.
\]
Then under similar estimation of the second order derivation equation
and taking summation among all the samples $i,j=1,\cdots,N$
\begin{align*}
\sum_{i,j}\left\Vert \mathbb{E}^{Z}\partial_{x_{i}x_{j}}^{2}w_{Z}\left(x,t\right)\right\Vert _{\infty} & \leq\sum_{i,j}\left\Vert \partial_{x_{i}x_{j}}^{2}w\left(x,0\right)\right\Vert _{\infty}+C_{0}\int_{0}^{t}\sum_{i}\left\Vert \mathbb{E}^{Z}\partial_{x_{i}}w_{Z}\left(x,s\right)\right\Vert _{\infty}\mathrm{d}s+C_{1}\int_{0}^{t}\sum_{i,j}\left\Vert \mathbb{E}^{Z}\partial_{x_{i}x_{j}}^{2}w_{Z}\left(x,s\right)\right\Vert _{\infty}\mathrm{d}s\\
 & \leq C\left(T\right)+C_{1}\int_{0}^{t}\sum_{i,j}\left\Vert \mathbb{E}^{Z}\partial_{x_{i}x_{j}}^{2}w_{Z}\left(x,s\right)\right\Vert _{\infty}\mathrm{d}s.
\end{align*}
In the second row, we use the initial condition $w\left(x,0\right)=\frac{1}{N}\sum_{i=1}^{N}\varphi\left(x_{i}\right)$
and the estimation on the first derivative, so that the bounds on
the right are both of order $O\left(1\right)$ and independent of
$N$
\begin{align*}
\sum_{i,j}\left\Vert \partial_{x_{i}x_{j}}^{2}w\left(\cdot,0\right)\right\Vert _{\infty} & =\sum_{i}\frac{1}{N}\left\Vert \nabla^{2}\varphi\right\Vert _{\infty},\\
\sum_{i}\left\Vert \partial_{x_{i}}w\left(\cdot,s\right)\right\Vert _{\infty} & \leq\sum_{i}\frac{C}{N}.
\end{align*}
Letting $f\left(t\right)=\sum_{i,j}\left\Vert \mathbb{E}^{Z}\partial_{x_{i}x_{j}}^{2}w_{Z}\left(\cdot,t\right)\right\Vert _{\infty}$,
then
\[
f\left(t\right)\leq C+C_{1}\int_{0}^{t}f\left(s\right)\mathrm{d}s.
\]
Gr\"{o}nwall's inequality gives
\[
\sum_{i,j}\left\Vert \partial_{x_{i}x_{j}}^{2}w\right\Vert _{\infty}\leq C\left(t,\varphi\right),
\]
with the constant $C$ on the right only dependent on the time $t$
and the test function $\varphi$.
\end{proof}
\renewcommand\theequation{B\arabic{equation}}
\setcounter{equation}{0}

\section{Detailed RBM formulation for the test models\label{appen2:Detailed-RBM-formulation}}

\subsection{RBM equations for the conceptual turbulent model\label{subsec:RBM-conceptual}}

Here, we show the detailed equations of the RBM model for the ensemble
prediction of the conceptual turbulent model (\ref{eq:conceptual_model}).
First, we decompose the small-scale states $v=\left\{ v_{1,k},v_{2,k}\right\} $
further into the unstable leading modes $v_{1,k},k\leq K_{1}$, and
the rest fluctuating smaller scales, $v_{2,k},K_{1}<k\leq K$, with
stable dynamics. This decomposition is used to also resolve the leading
fluctuation modes containing intermittent unstable growth when $-\left(d_{k}+\gamma\bar{u}\right)>0$
from the coupling with the mean. In this way, we still only need to
sample a much lower dimensional subspace containing the most energetic
leading modes together with the mean state, that is, $\left\{ \bar{u},v_{1,k}\right\} _{k\leq K_{1}}$,
while the much less energetic stable smaller-scale fluctuation modes
are modeled by the random batches. The ensemble approximation of the
marginal PDF with $N_{1}$ samples becomes
\[
p_{\mathrm{RBM}}\left(\bar{u},v_{1}\right)=\frac{1}{N_{1}}\sum_{i=1}^{N_{1}}\delta\left(\bar{u}-\bar{u}^{\left(i\right)}\right)\otimes\prod_{k=1}^{K_{1}}\delta\left(v_{1,k}-v_{1,k}^{\left(i\right)}\right).
\]
The above empirical approximation requires a much smaller ensemble
size $N_{1}$ and is independent of the full dimension $K\left(\gg K_{1}\right)$
of the system. Next, the random batch partition is applied to the
large group of small-scale fluctuation modes $\left\{ v_{2,k}\right\} _{k>K}$
exploiting their ergodicity with fast mixing rate. In particular,
the large number of fluctuation modes $\left\{ v_{2,k}\right\} $
are divided into small batches of size $p$ each. In the RBM model,
the sample size $N_{1}$ in the ensemble simulation is associated
with the same number of batches. For the $i$-th sample in the large
scale ensemble $\left\{ \bar{u}^{\left(i\right)},v_{1,k}^{\left(i\right)}\right\} $,
the large-scale mean equation only includes $p$ randomly picked small-scale
modes in one batch, $\left\{ v_{2,k}:k\in\mathcal{I}_{i}\right\} $.
The RBM model with samples $i=1,\cdots,N_{1}$ at time step $t=t_{n}$
becomes
\begin{equation}
\begin{aligned}\frac{\mathrm{d}\bar{u}^{\left(i\right)}}{\mathrm{d}t}= & -\bar{d}\bar{u}^{\left(i\right)}+\frac{\gamma}{K_{1}}\sum_{k=1}^{K_{1}}\left(v_{1,k}^{\left(i\right)}\right)^{2}+\frac{\gamma}{p}\sum_{k\in\mathcal{I}_{i}}\left(v_{2,k}\right)^{2}-\bar{\alpha}\left(\bar{u}^{\left(i\right)}\right)^{3}+\bar{F},\\
\frac{\mathrm{d}v_{1,k}^{\left(i\right)}}{\mathrm{d}t}= & -d_{k}v_{1,k}^{\left(i\right)}-\gamma\bar{u}^{\left(i\right)}v_{1,k}^{\left(i\right)}+\sigma_{k}\dot{W}_{k}^{\left(i\right)},\quad1\leq k\leq K_{1},\\
\frac{\mathrm{d}v_{2,k}}{\mathrm{d}t}= & -d_{k}v_{2,k}-\gamma\bar{u}^{\left(i\right)}v_{2,k}+\sigma_{k}\dot{W}_{k},\quad\quad k\in\mathcal{I}_{i},\quad i=1,\cdots,N_{1}.
\end{aligned}
\label{eq:model_batch}
\end{equation}
The small-scale modes $v_{2,k}$ are segmented into small batches
with $\cup_{i}\mathcal{I}_{i}=\left\{ k:K_{1}<k\leq K\right\} $.
In addition, we add a small ensemble $n_{2}\left(=5\right)$ for the
small-scale modes. Therefore, there are $N_{1}=\left\lceil n_{2}\frac{K-K_{1}}{p}\right\rceil $
samples to sufficiently sample the resolved subspace. Only a very
small ensemble is needed for the high dimensional subspace for $\left\{ v_{2,k}\right\} $
satisfying

\[
n_{2}=\frac{N_{1}p}{K-K_{1}}.
\]
We can estimate the computational cost of (\ref{eq:model_batch})
as $O\left(N_{1}\left(1+K_{1}\right)p\right)\sim O\left(n_{2}\left(1+K_{1}\right)K\right)$.
Notice that the cost won't have the exponential growth depending on
the full dimension $1+K$ of the system by avoiding sampling the full
high dimensional space.

\subsection{RBM equations for the topographic barotropic model\label{subsec:RBM-topographic}}

In a similar way, we display the detailed equations for the implementation
of the RBM model for the topographic barotropic model (\ref{eq:topo_model}).
Again, we focus on the ensemble approximation of the dominant mean
state and first $K_{1}$ leading modes, $\left(U,v_{1,k}\right)$
with $\left|k\right|\leq K_{1}$
\[
p_{\mathrm{RBM}}\left(U,v_{1}\right)=\frac{1}{N_{1}}\sum_{i=1}^{N_{1}}\delta\left(U-U^{\left(i\right)}\right)\otimes\prod_{k=1}^{K_{1}}\delta\left(v_{1,k}-v_{1,k}^{\left(i\right)}\right),
\]
while all the other small-scale modes $\left\{ v_{2,k}\right\} ,K_{1}<\left|k\right|\leq K$
are modeled in the random batches $\cup_{i}\mathcal{I}_{i}=\left\{ k:K_{1}<\left\lceil k\right\rceil \leq K\right\} $
with size of the batches $p=\left|\mathcal{I}_{i}\right|$. The RBM
model for samples $i=1,\cdots,N_{1}$ at time step $t=t_{n}$ becomes

\begin{equation}
\begin{aligned}\frac{\mathrm{d}U^{\left(i\right)}}{\mathrm{d}t}= & \:\frac{1}{K_{1}}\sum_{\left|k\right|\leq K_{1}}h_{k}^{*}v_{1,k}^{\left(i\right)}+\frac{1}{p}\frac{\tilde{K}-1}{p-1}\sum_{k\in\mathcal{I}_{i}}h_{k}^{*}v_{2,k}-d_{0}U^{\left(i\right)}+\sigma_{0}\dot{W}_{0}^{\left(i\right)},\\
\frac{\mathrm{d}v_{1,k}^{\left(i\right)}}{\mathrm{d}t}= & \:i\left(k^{-1}\beta-kU^{\left(i\right)}\right)v_{1,k}^{\left(i\right)}-U^{\left(i\right)}h_{k}-d_{k}v_{1,k}^{\left(i\right)}+\sigma_{k}\dot{W}_{k}^{\left(i\right)},\quad\left|k\right|\leq K_{1},\\
\frac{\mathrm{d}v_{2,k}}{\mathrm{d}t}= & \:i\left(k^{-1}\beta-kU^{\left(i\right)}\right)v_{2,k}-U^{\left(i\right)}h_{k}-d_{k}v_{2,k}+\sigma_{k}\dot{W}_{k},\qquad k\in\mathcal{I}_{i}.
\end{aligned}
\label{eq:topo_model_batch}
\end{equation}
Notice that we have the fixed-in-time auxiliary variable $h_{k}$
thus we need to take the new quadratic coupling coefficients $c_{kl}=\frac{1}{p}\frac{\tilde{K}-1}{p-1}$
in (\ref{eq:weight_rbm}) between the two different modes $v_{2,k},h_{k}$.
In this case, the $\tilde{K}=K-K_{1}$ small-scale modes $v_{2,k}$
are partitioned into $N_{1}$ small batches, and only the modes in
the $i$-th batch is used to update the $i$-th mean state sample
$U^{\left(i\right)}$. Still, the first $K_{1}$ leading modes $v_{1,k}^{\left(i\right)}$
are resolved explicitly by the ensemble and acting on the corresponding
mean state $U^{\left(i\right)}$ considering their important role
in generating the correct dynamics.

\bibliographystyle{plain}
\nocite{*}
\bibliography{refs}

\end{document}